\theoremstyle{definition}
\newtheorem{theorem}{Proposition}
\algnewcommand\algorithmicinput{\textbf{Input:}}
\algnewcommand\Input{\item[\algorithmicinput]}
\def\BibTeX{{\rm B\kern-.05em{\sc i\kern-.025em b}\kern-.08em
    T\kern-.1667em\lower.7ex\hbox{E}\kern-.125emX}}
\begin{document}

\history{Date of publication xxxx 00, 0000, date of current version xxxx 00, 0000.}
\doi{10.1109/ACCESS.2017.DOI}

\title{Beamforming Feedback-based Model-Driven Angle of Departure Estimation Toward Legacy Support in WiFi Sensing:
    An Experimental Study}
\author{
    \uppercase{Sohei~Itahara}\authorrefmark{1}, \IEEEmembership{Student~Member,~IEEE},
    \uppercase{Sota~Kondo}\authorrefmark{1}, \IEEEmembership{Student~Member,~IEEE},
    \uppercase{Kota~Yamashita}\authorrefmark{1}, \IEEEmembership{Student~Member,~IEEE},
    \uppercase{Takayuki~Nishio}\authorrefmark{1,2}, \IEEEmembership{Senior~Member,~IEEE},
    \uppercase{Koji~Yamamoto}\authorrefmark{1}, \IEEEmembership{Senior~Member,~IEEE},
    \uppercase{and Yusuke~Koda}\authorrefmark{3}, \IEEEmembership{Member,~IEEE},
}
\address[1]{Graduate School of Informatics, Kyoto University, Kyoto 606-8501, Japan (e-mail: kyamamot@i.kyoto-u.ac.jp)}
\address[2]{School of Engineering, Tokyo Institute of Technology Ookayama, Meguro-ku, Tokyo, 152-8550, Japan (e-mail: nishio@ict.e.titech.ac.jp)}
\address[3]{Centre of Wireless Communications, University of Oulu, 90014 Oulu, Finland (e-mail: Yusuke.Koda@oulu.fi)}

\tfootnote{
    This research and development work was supported in part by the MIC/SCOPE \#JP196000002 and JSPS KAKENHI Grant Number JP18H01442.
}

\markboth
{Author \headeretal: Preparation of Papers for IEEE TRANSACTIONS and JOURNALS}
{Author \headeretal: Preparation of Papers for IEEE TRANSACTIONS and JOURNALS}

\corresp{Corresponding author: Koji Yamamoto (e-mail: kyamamot@i.kyoto-u.ac.jp).}

\begin{abstract}
    This study experimentally validated the possibility of angle of departure (AoD) estimation using multiple signal classification (MUSIC) with only WiFi control frames for beamforming feedback (BFF), defined in IEEE 802.11ac/ax.
    The examined BFF-based MUSIC is a model-driven algorithm, which does not require a pre-obtained database.
    This contrasts with most existing BFF-based sensing techniques, which are data-driven and require a pre-obtained database.
    Moreover, the BFF-based MUSIC affords an alternative AoD estimation method without access to channel state information (CSI).
    Specifically, the extensive experimental and numerical evaluations demonstrated that the BFF-based MUSIC successfully estimates the AoDs for multiple propagation paths.
    Moreover, the evaluations performed in this study revealed that the BFF-based MUSIC achieved a comparable error of AoD estimation to the CSI-based MUSIC,
    while BFF is a highly compressed version of CSI in IEEE 802.11ac/ax.
\end{abstract}

\begin{keywords}
    Wireless sensing, channel state information, beamforming feedback, MUSIC algorithm.
\end{keywords}

\titlepgskip=-15pt

\maketitle

\section{Introduction}
\label{sec:introduction}
WiFi sensing~\cite{yongsen2019wifi,zafari2019asurvey} is envisioned as a technology that adds value to existing wireless local area networks beyond the communication infrastructure.
In WiFi sensing, an example of widely used radio frequency (RF) information is channel state information (CSI),
which is measured in multiple-input multiple-output orthogonal frequency-division multiplexing (MIMO-OFDM) systems~\cite{yongsen2019wifi}.
CSI is generally measured in the MIMO-OFDM communication procedures and includes a high sensing capacity
to facilitate CSI-based sensing with low implementation cost and high sensing accuracy.

Presently, the next-generation WiFi standards task group, IEEE 802.11bf~\cite{11bf}, is actively embedding WiFi sensing ability to WiFi standards.
In IEEE 802.11bf~\cite{11bf},
it is required to allow WiFi sensing with legacy devices (i.e., devices whose physical (PHY) layers are compliant with legacy WiFi standards, such as IEEE 802.11ac/ax\cite{11ax,11ac}).
A challenge in meeting this requirement is that the legacy PHY layer processes and discards CSI, resulting in the disability of the CSI in WiFi sensing.

Beamforming feedback (BFF), which is a compressed version of CSI, has attracted attention as an alternative RF information to CSI, in order to address this challenge~\cite{murakami2018wireless,miyazaki2019init, takahashi2019dnn,kanda2021respiratory,kato2021csi2image,kudo2021deep,fukushima2019evaluating}.
Specifically, BFF includes a highly quantized right singular matrix of the CSI matrix for each subcarrier and subcarrier-averaged stream gain.
In IEEE 802.11ac/ax~\cite{11ac,11ax}, a station (STA) transmits BFFs to an access point (AP) without any encryption, allowing an arbitral WiFi device to obtain the BFFs with medium access control (MAC)-level frame-capturing tools.
Prior studies~\cite{murakami2018wireless,miyazaki2019init, takahashi2019dnn,kanda2021respiratory,kato2021csi2image,kondo2021bi,kudo2021deep,fukushima2019evaluating} have demonstrated the feasibility of BFF-based sensing in several sensing tasks, such as human localization and respiratory estimation.

However, the existing BFF-based sensing literature has lacked the following perspectives,
model-driven sensing and comparison of CSI to BFF in terms of sensing accuracy.
First, to the best of the authors' knowledge, in the BFF-based sensing literature, there are no model-driven algorithms, which geometrically estimate the surrounding environment based on mathematical modeling, although a vast of CSI-based model-driven algorithms~\cite{kotaru2015spotfi,xiong2013arraytrack} have been proposed.
In contrast, the existing BFF-based sensing methods~\cite{murakami2018wireless,miyazaki2019init, takahashi2019dnn,kanda2021respiratory,kato2021csi2image,kudo2021deep} are referred to as data-driven methods;
namely, the sensing tasks are conducted via pattern matching to a pre-obtained training dataset, which comprises the BFF and corresponding actual-measured target labels (e.g., human locations or device locations).
Because such training dataset generation procedure incurs tremendous human costs, the cost of data-driven sensing is generally higher than that of model-driven sensing.
Therefore, the lack of model-driven methods in the BFF-based sensing literature results in significant drawbacks to CSI-based sensing.

This motivated the development of a BFF-based model-driven sensing algorithm that does not require preparing the dataset.
To this end, we revisit model-based sensing in the CSI-based sensing literature.
A fundamental technique referred to as the multiple signal classification (MUSIC) algorithm~\cite{schmidt1986multiple} is used to estimate the angle of departure (AoD) for each of the multiple propagation paths.
Based on the original MUSIC algorithm~\cite{schmidt1986multiple}, which imposes some constraints,
there have been various extensions to CSI-based sensing, for example, the alleviation of constraints regarding antenna array~\cite{xinyu2019triangular} and propagation environment~\cite{kotaru2015spotfi}, and the realization of addressing-sensing tasks~\cite{li2017indotrack}.
These studies constructed high-capacity and widely applicable sensing frameworks.
However, whether the original MUSIC algorithm applies to BFF-based sensing remains unknown.

This paper presents model-driven analytics of BFF-based sensing and demonstrates that an extension of the MUSIC algorithm~\cite{schmidt1986multiple} can be realized using BFF.
Specifically, given the $\bar{\bm{\varLambda}}$ and $\bm{V}_k$ as the subcarrier-averaged stream gain and right singular matrix of CSI matrix at the $k$th subcarrier, the noise subspace vectors in the MUSIC algorithm are estimated as the eigenvectors of a covariance matrix $\sum_k \bm{V}_k\bar{\bm{\varLambda}}\bm{V}^\mathrm{H}_k$ with an eigenvalue of zero.
In contrast, CSI-based MUSIC generally uses a covariance matrix $\sum_k {\bm{h}_k}^\mathrm{H}\bm{h}_k$, where $\bm{h}_k$ is a row vector of the CSI matrix.
The mathematical analytics revealed that the role of the covariance matrix obtained from BFF has the same role as the covariance matrix obtained from CSI.
Our numerical evaluation and extensive experimental evaluations indicated that the BFF-based MUSIC algorithm accurately estimates AoDs and is comparable to the CSI-based MUSIC.

Second, to the best of the authors' knowledge, the existing BFF-based sensing approach has not provided a sensing-accuracy comparison between CSI and BFF.
Instead of the benefit of usability of BFF, because the BFF is a highly compressed version of CSI,
the sensing accuracy of the BFF is, in principle, lower than that of CSI.
Thus, the experimental comparisons of CSI and BFF are essential to assess the feasibility of replacing CSI with BFF.
We compared the AoD estimation accuracy of BFF- and CSI-based sensing and revealed that the BFF-based sensing achieves comparable accuracy to the CSI-based sensing.
Specifically, in three experimental environments, the median AoD estimation accuracy difference between BFF-based MUSIC and CSI-based MUSIC is smaller than 0.1\textdegree.

The contributions of this study are summarized as follows:
\begin{itemize}
    \item
          We analytically confirmed that the MUSIC algorithm can be performed using only the BFF frame.
          Specifically, using $\bar{\bm{\varLambda}}$ and $\bm{V}_k$, which are contained in the BFF frame,
          the noise subspace vectors in the MUSIC algorithm are estimated as the eigenvectors of $\sum_k \bm{V}_k\bar{\bm{\varLambda}}\bm{V}^\mathrm{H}_k$ with an eigenvalue of zero.
          This finding shows that the AoD estimation only using BFF is possible, shedding light on the applicability of model-driven BFF-based sensing to various sensing tasks (e.g., human sensing and device localization).
    \item
          We demonstrated the feasibility of model-driven BFF-based sensing as an alternative method without requiring access to CSI.
          More specifically, a numerical evaluation and extensive experimental evaluations reveal that, while the BFF procedure defined in IEEE 802.11ac/ax quantizes $\bm{V}_k$ and $\bar{\bm{\varLambda}}$ (e.g., $3\times 2$ complex matrix is represented by only 30\,bit), the AoD estimation accuracy of BFF-based MUSIC is comparable to that of CSI-based MUSIC.
          This is the first work that compares BFF-based sensing and CSI-based sensing in terms of sensing accuracy in the same experimental environment.
\end{itemize}

This study focused on the feasibility of the original MUSIC algorithm using BFF and the assessment of the accuracy degradation of BFF from CSI.
Thus, the comparison and implementation of more sophisticated CSI-based sensing methods such as \cite{kotaru2015spotfi} and data-driven BFF-based sensing methods are out of the scope of this study.

\subsection{Notations}
We denote the transpose of a matrix $\bm{H}$ as $\bm{H}^\mathrm{T}$, its conjugate as $\bm{H}^*$, its Hermitian transpose as $\bm{H}^\mathrm{H}$, and the $(i,j)$ element as $H_{i,j}$.
We denote the $i$th element of a vector $\bm{a}$ as $a_i$ and the Euclidian norm as $|\bm{a}|$.
The identity matrix is represented as $\bm{E}$.
The diagonal matrix, whose $i$th diagonal element is $a_i$, is represented as $\mathrm{diag}(\bm{a})$.
The $M\times N$ zero matrix is denoted as $\bm{0}_{M\times N}$.

\subsection{Related Works and Preliminary}
\subsubsection{Related Works}
Here, we provide a brief review of existing WiFi sensing literature, detailing the difference from such studies.

\begin{table*}[t!]
    \caption{
        Summary of BFF-based WiFi sensing.
    }
    \label{table:related_works}
    \centering
    \scalebox{1.}{
        \begin{tabular}{cccccccc}
            \toprule
                                                                                                               & Task                        & Model/Data driven?    & Comparison with CSI-based sensing & \\
            \midrule
            \cite{fukushima2019evaluating,murakami2018wireless,miyazaki2019init,takahashi2019dnn, kondo2021bi} & Human localization          & Data-driven           & No                                  \\
            \cite{kanda2021respiratory}                                                                        & Respiratory rate estimation & Model-driven          & No                                  \\
            \cite{kato2021csi2image}                                                                           & Camera image reconstruction & Data-driven           & No                                  \\
            \cite{kudo2021deep}                                                                                & Device localization         & Data-driven           & No                                  \\
            \cite{kondo2021bi}                                                                                 & AoD estimation              & Data-driven           & No                                  \\
            \textbf{This study}                                                                                & AoD estimation              & \textbf{Model-driven} & \textbf{Yes}                        \\
            \bottomrule
        \end{tabular}
    }
\end{table*}

\textbf{CSI-based sensing.}
Due to the rich sensing capacity of CSI, CSI has been attracted for providing RF information for WiFi sensing~\cite{yongsen2019wifi,zafari2019asurvey}.
There are various CSI-based sensing methods, including model-driven methods~\cite{yongsen2019wifi,zafari2019asurvey,xiong2013arraytrack,tian2016pila,kotaru2015spotfi} and data-driven methods~\cite{yongsen2019wifi,zafari2019asurvey,ahmed2020device}.
While the data-driven methods require a considerable cost to collect training datasets, model-driven methods are conducted without any training dataset, thus resulting in lower implementation costs.
A basic theorem of the model-driven methods is the MUSIC algorithm~\cite{xiong2013arraytrack,schmidt1986multiple,kotaru2015spotfi}, which is detailed in Section~\ref{sssec:MUSIC}.
Based on the MUSIC algorithm, various extended versions~\cite{xinyu2019triangular,kotaru2015spotfi, li2017indotrack} are proposed.

However, devices whose firmware is compliant with legacy WiFi standards (e.g., IEEE 802.11ac/ax) cannot conduct CSI-based sensing without remodeling their firmware.
This is because CSI is processed and discarded in the PHY layer at the legacy WiFi standards.
Thus, a remodeled firmware (e.g., \cite{gringoli2019free,halperin2011tool,xie2015precise}) is required to conduct CSI-based sensing.
Moreover, few wireless chips permit access to the PHY layer from the remodeled firmwares~\cite{gringoli2019free,halperin2011tool,xie2015precise}.
In contrast with CSI-based sensing, BFF-based sensing, which can be performed using arbitral devices compliant with the IEEE 802.11ac/ax, is the focus herein.

\textbf{BFF-based sensing.}
Table~\ref{table:related_works} summarizes the existing BFF-based sensing studies.
As mentioned in the previous section, because BFFs can be collected via the MAC-layer frame capturing without any constraints regarding the firmware,
BFF sensing has the potential as an alternative to CSI in WiFi sensing with legacy devices.
There are few studies on BFF-based sensing, for instance, those concerning human detection~\cite{murakami2018wireless, miyazaki2019init, takahashi2019dnn,kondo2021bi,fukushima2019evaluating}, device localization~\cite{kudo2021deep}, respiratory rate estimation\cite{kanda2021respiratory}, and camera image estimation~\cite{kato2021csi2image}.
Most of the studies~\cite{murakami2018wireless,miyazaki2019init,takahashi2019dnn,kato2021csi2image, kondo2021bi, kudo2021deep,fukushima2019evaluating} are categorized as data-driven methods.
Only \cite{kanda2021respiratory} is categorized into model-driven methods.
\cite{kanda2021respiratory} estimates the respiratory rate of a human by focusing on the relationship between the temporal variations of the BFF and respiratory rate.
However, \cite{kanda2021respiratory} is a heuristic and does not provide any propagation model-based analytics.
In contrast with these studies, the present study is based on a well-known propagation model~\cite{sayeed2002deconstructing} and analytically confirmed that the AoD is estimated using the BFF via MUSIC algorithm.

Moreover, this is the first work that presents accuracy comparisons between CSI and BFF.
Prior studies~\cite{murakami2018wireless, miyazaki2019init, takahashi2019dnn,kondo2021bi,kanda2021respiratory,kato2021csi2image, kudo2021deep, fukushima2019evaluating} have only provided the accuracy of BFF-based sensing and have not included comparisons between CSI- and BFF-based sensing.
Because the BFF includes significant quantization losses, the accuracy of the BFF-based sensing is principally inferior to that of the CSI-based sensing.
Thus, evaluating the degree of accuracy degradation is essential to assess whether BFF can be an alternative to CSI.
Our extensive experimental evaluations revealed that BFF-based MUSIC achieves a comparable median of AoD estimation accuracy to CSI-based MUSIC.

\subsubsection{Beamforming Feedback Scheme in 802.11ac/ax}
\label{sssec:BFF}
We consider a MIMO communication system in which a transmitter (e.g., AP) transmits signals to a receiver (e.g., STA).
We denote the CSI matrix from the transmitter to the receiver at the $k$th subcarrier as $\bm{H}_k\in \mathbb{C}^{M\times N}$, where $M$ and $N$ denote the number of antenna elements of the receiver and transmitter, respectively.
In IEEE 802.11ac/ax standards, to provide efficient eigen beam-space division multiplexing~\cite{miyashita2002high}, the receiver feedbacks the BFF frame to the transmitter~\cite{11ac,11ax}, which contains a compressed version of the CSI matrix.
Because the BFF frame is exchanged over the air without encryption,
BFFs can be obtained using the MAC frame-capturing tools, thus enabling an arbitral sniffer to perform BFF-based WiFi sensing without requiring access to the PHY layer components of the transmitter and receiver~\cite{miyazaki2019init}.

The BFF contains highly quantized right singular matrix $\bm{V}_k$ of the CSI matrix $\bm{H}_k$ for each subcarrier and a subcarrier-averaged stream gain~\cite{11ac,11ax}.
The right singular vector $\bm{V}_k$ is calculated using singular value decomposition as
\begin{align}
    \label{equ:SVD}
    \bm{H}_k = \bm{U}_k\bm{\varSigma}_k{\bm{V}_k}^\mathrm{H},
\end{align}
where $\bm{U}_k$ and $\bm{V}_k$ are unitary matrices, and $\bm{\varSigma}_k$ is a diagonal matrix with singular values~\cite{stewart1993early}.
Denoting number of subcarriers as $K$, the subcarrier-averaged stream gain is represented by a diagonal matrix $\bar{\bm{\varLambda}}$, where
\begin{align}
    \bar{\bm{\varLambda}} = \frac{1}{K}\sum_{k=1}^{K}{\bm{\varSigma}_k}^2.
\end{align}
Notably, the diagonal elements of $\bm{\varSigma}_k$ are generally real and positive and are listed in descending order.

As per IEEE 802.11ac/ax~\cite{11ac,11ax} standards, the $\bm{V}_k$ and $\bar{\bm{\varLambda}}$ are highly quantized to reduce the payload size of the BFF frame.
Specifically, $\bm{V}_k$ is converted to the $N^\mathrm{angle}$ angles without any quantization losses using Givens rotation, where $N^\mathrm{angle}$ is determined by $N$ and $M$.
The $N^\mathrm{angle}$ angles are quantized with a predefined quantization step size $\Delta$ and contained in a BFF frame.
The IEEE 802.11ac~\cite{11ac} defines four quantization step sizes, namely $\pi/4$\,rad, $\pi/8$\,rad, $\pi/16$\,rad, and $\pi/32$\,rad.
The subcarrier-averaged stream gain $\bar{\bm{\varLambda}}$ are quantized with quantize step sizes of 0.25\,dB~\cite{11ac}.

\subsubsection{Propagation Model}
We consider a discrete physical propagation model~\cite{sayeed2002deconstructing}, wherein a uniform linear array is employed at the transmitter and receiver.
In the following description, for simplicity, we assume that the distances between consecutive antennas at the transmitter and receiver are the same, which is denoted as $d$.\footnote{This assumption can be easily expanded to the case that the distances between consecutive antennas differs between the transmitter and receiver.}

Let $L$ be the number of propagation paths.
Additionally, let $\phi_l$ be the AoD and $\theta_l$ be the angle of arrival of the $l$th path.
The complex scalar $r_l$ denotes the attenuation from the transmitter's first antenna to the receiver's first antenna by the signal traveling along the $l$th propagation path.
We denote a complex phase shift $a(\theta)$ as $\exp(2\pi d\sin (\theta)/\lambda)$, where $\lambda$ is the wavelength.
For shorthand notation, let $L$-dimensional vectors $\bm{\theta}$, $\bm{\phi}$, and $\bm{r}$ represent $(\theta_1,\ldots, \theta_L)^\mathrm{T}$, $(\phi_1,\ldots, \phi_L)^\mathrm{T}$, and $(r_1,\ldots, r_L)^\mathrm{T}$, respectively.
Additionally, we denote the steering vector $\bm{a}(\theta) \coloneqq (1, a(\theta), \ldots, a(\theta)^{M-1})^\mathrm{T}$;
$L \times M$ steering matrix $\bm{A}(\bm{\theta})\coloneqq(\bm{a}(\theta_1),\ldots,\bm{a}(\theta_L))^\mathrm{T}$; and
$L \times L$ diagonal matrix $\bm{R}\coloneqq\mathrm{diag}(\bm{r})$.
In the discrete physical propagation model~\cite{sayeed2002deconstructing}, the CSI matrix $\bm{H}$ is represented as
\begin{align}
    \label{equ:csi}
    \bm{H} = \bm{A}(\bm{\theta})\bm{R}\bm{A}(\bm{\phi})^\mathrm{H}.
\end{align}

\subsubsection{Multiple Signal Classification (MUSIC) Algorithm}
\label{sssec:MUSIC}
The CSI-based MUSIC algorithm~\cite{xiong2013arraytrack} estimates multiple AoDs from CSI by assuming $L<N$.
The general CSI-based MUSIC consists of three steps as follows~\cite{xiong2013arraytrack,schmidt1986multiple,kotaru2015spotfi}.
First, given an arbitral slim and full-rank matrix as $\bm{S}$,
we estimate a matrix $\bm{X}$ represented by $\bm{S}\bm{A}(\bm{\phi})^\mathrm{H}$.
For example, in \cite{schmidt1986multiple,kotaru2015spotfi}, the matrix $\bm{X}_0$ is a $K\times M$ matrix, whose $k$th row vector is the first row vector of the CSI matrix at the $k$th subcarrier.

Considering the propagation model denoted in \eqref{equ:csi}, the first row vector of the CSI matrix at the $k$th subcarrier is represented by
\begin{align}
    \bm{h}_k = \left(\bm{r}_k^\mathrm{T} \bm{A}(\bm{\phi})^\mathrm{H} \right)^\mathrm{T}.
\end{align}
Given $K\times L$ matrix $\bm{S}_0$ as $(\bm{r}_1, \ldots, \bm{r}_K)^\mathrm{T}$, the matrix $\bm{X}_0$ is represented by
\begin{align}
    \bm{X}_0 & = \left(\bm{h}_1,\ldots,\bm{h}_1\right)^\mathrm{T} =
    \left(\bm{A}(\bm{\phi})^\mathrm{*} \bm{r}_1, \ldots,  \bm{A}(\bm{\phi})^\mathrm{*} \bm{r}_K \right)^\mathrm{T}            \\
             & =(\bm{r}_1, \ldots, \bm{r}_K)^\mathrm{T} \bm{A}(\bm{\phi})^\mathrm{H} = \bm{S}_0 \bm{A}(\bm{\phi})^\mathrm{H}.
\end{align}
Generally, $\bm{S}_0$ is slim and full-rank~\cite{xiong2013arraytrack,schmidt1986multiple}; thus, $\bm{X}_0$ is represented as a product of the slim and full-rank matrix and $\bm{A}(\bm{\phi})^\mathrm{H}$.

Second, a covariance matrix $\bm{C}\coloneqq \bm{X}^\mathrm{H}\bm{X}$ is obtained, and the $M-L$ noise subspace vectors $\bm{e}_1,\ldots,\bm{e}_{M-L}$ are calculated as the $M-L$ eigenvectors of $\bm{C}$ with small eigenvalues.
Lastly, the AoDs are estimated as angles that achieve peaks of MUSIC spectrum $g(\phi)$, where
\begin{align}
    g(\phi) \coloneqq \frac{1}{\bm{a}(\phi)^\mathrm{H}{\bm{E}_\mathrm{N}}^{\mathrm{H}}\bm{E}_\mathrm{N}\bm{a}(\phi)},
\end{align}
where $\bm{E}_\mathrm{N} = (\bm{e}_1,\ldots,\bm{e}_{M-L})$.

\section{Beamforming Feedback-based Multiple Signal Classification}
\label{sec:prop}
Fig.~\ref{fig:system} shows the system model consisting of an STA, an AP, and a sniffer.
The STA receives the sounding frame (e.g., the null data packet in IEEE 802.11ac/ax~\cite{11ac,11ax}) from the AP, estimates the CSI, and calculates the BFF from the CSI, which is detailed in Section~\ref{sssec:BFF}.
Then, the STA transmits the BFF to the AP without any encryption.
The sniffer captures the BFF transmitted from the STA, decodes the BFF, and obtains the right singular matrix $\bm{V}_k$ for each subcarrier and subcarrier-averaged stream gain $\bar{\bm{\varLambda}}$.
Subsequently, the sniffer estimates the AoDs of the AP using the BFF-based MUSIC method, which is detailed in the following sections.

\begin{figure}[!t]
    \centering
    \includegraphics[width=0.44\textwidth]{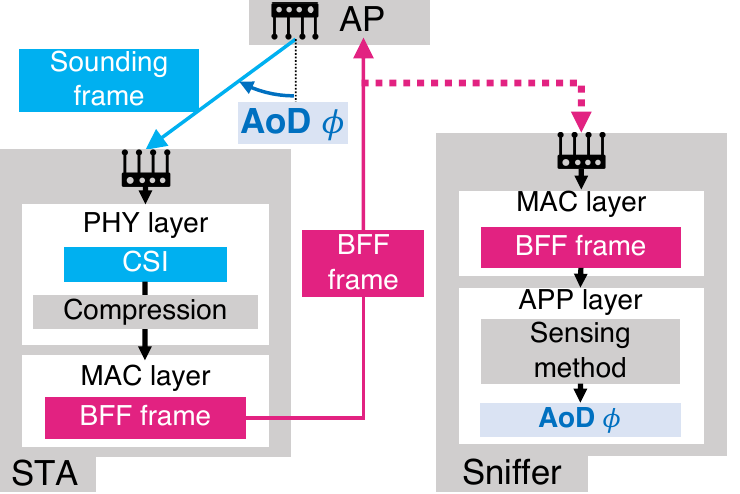}
    \caption{System model of BFF-based MUSIC.
        STA transmits BFF to AP without any encryption, allowing the sniffer to capture the BFF and conduct BFF-based sensing.
    }
    \label{fig:system}
\end{figure}

This study confirmed that the MUSIC algorithm is applicable using only the BFF to estimate multiple AoDs, which is proved in Proposition~\ref{tho:1}.
Specifically, assuming that $\bar{\bm{\varLambda}}={\bm{\varSigma}_k}^2$ for all $k$, the covariance matrix used in the MUSIC algorithm is estimated as
\begin{align}
    \label{equ:coeff}
    \bm{C} = \sum_{k=1}^K {\bm{V}_k}\bar{\bm{\varLambda}}\,{\bm{V}_k}^\mathrm{H}.
\end{align}
Based on the covariance matrix\footnote{
    It should be noted that in \cite{kudo2021deep}, \eqref{equ:coeff} is used for the feature extraction method in the data preprocessing procedure for data-driven BFF-based sensing.}, the AoDs are estimated by the general MUSIC algorithm, which is detailed in Section~\ref{sssec:MUSIC}.

\begin{theorem}
    \label{tho:1}
    Given a slim and full-rank matrix $\bm{S}$ and assuming $\bar{\bm{\varLambda}}=\bm{\varSigma}_k^2$,
    the covariance matrix $\bm{C}$ defined in \eqref{equ:coeff} is denoted by a covariance matrix of $\bm{S}\bm{A}(\bm{\phi})^\mathrm{H}$.
\end{theorem}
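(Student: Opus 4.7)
The plan is to reduce the BFF-derived covariance \eqref{equ:coeff} to a CSI-derived covariance and then recognize it as a covariance of the advertised form. First, I would use the SVD in \eqref{equ:SVD} together with the unitarity of $\bm{U}_k$ to obtain $\bm{H}_k^{\mathrm{H}}\bm{H}_k = \bm{V}_k\bm{\varSigma}_k^{2}\bm{V}_k^{\mathrm{H}}$. Invoking the hypothesis $\bar{\bm{\varLambda}}=\bm{\varSigma}_k^{2}$ then yields $\bm{V}_k\bar{\bm{\varLambda}}\bm{V}_k^{\mathrm{H}} = \bm{H}_k^{\mathrm{H}}\bm{H}_k$, and summing across subcarriers gives $\bm{C} = \sum_{k=1}^{K}\bm{H}_k^{\mathrm{H}}\bm{H}_k$. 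This is precisely the quantity the CSI-based MUSIC description in Section~\ref{sssec:MUSIC} builds when its matrix $\bm{X}_0$ is replaced by a stack of full CSI rows rather than only the first row.

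Second, I would substitute the physical model \eqref{equ:csi}, allowing the attenuation matrix to depend on the subcarrier index (writing $\bm{R}_k$ in place of $\bm{R}$ to accommodate frequency-dependent path phases, consistent with the $\bm{r}_k$ notation already used in Section~\ref{sssec:MUSIC}). Pulling the common factor $\bm{A}(\bm{\phi})$ out of the sum gives
\begin{align}
    \bm{C} = \bm{A}(\bm{\phi})\Bigl(\sum_{k=1}^{K}\bm{R}_k^{\mathrm{H}}\bm{A}(\bm{\theta})^{\mathrm{H}}\bm{A}(\bm{\theta})\bm{R}_k\Bigr)\bm{A}(\bm{\phi})^{\mathrm{H}}.
\end{align}

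Third, I would exhibit an explicit $\bm{S}$ such that $\bm{S}^{\mathrm{H}}\bm{S}$ equals the bracketed expression, so that $\bm{C}$ coincides with the covariance of $\bm{S}\bm{A}(\bm{\phi})^{\mathrm{H}}$. The natural choice is the vertical concatenation $\bm{S} = \bigl(\bm{R}_1^{\mathrm{T}}\bm{A}(\bm{\theta})^{\mathrm{T}},\,\ldots,\,\bm{R}_K^{\mathrm{T}}\bm{A}(\bm{\theta})^{\mathrm{T}}\bigr)^{\mathrm{T}} \in \mathbb{C}^{KM\times L}$, for which block-wise multiplication immediately reproduces the sum. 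Finally, I would verify that $\bm{S}$ is slim and full rank: slimness follows from the MUSIC hypothesis $L<M\le KM$, and full column rank follows because the steering matrix $\bm{A}(\bm{\theta})$ has a Vandermonde-like structure and thus full column rank whenever the AoAs are distinct, while each diagonal $\bm{R}_k$ is generically nonsingular, so every block $\bm{A}(\bm{\theta})\bm{R}_k$ is already full column rank.

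The only nontrivial step is the rank claim on $\bm{S}$, and I expect that to be the main obstacle in writing the argument cleanly; everything else is a direct identification. The rank claim, however, is standard in the MUSIC literature and reduces to the usual distinct-angle and nonzero-gain assumptions inherent to the discrete physical model~\cite{sayeed2002deconstructing}, so I would state these as mild genericity conditions rather than prove them from scratch.
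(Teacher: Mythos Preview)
Your proposal is correct and follows essentially the same route as the paper: reduce \eqref{equ:coeff} to $\sum_k\bm{H}_k^{\mathrm{H}}\bm{H}_k$ via the SVD, substitute the physical model \eqref{equ:csi}, pull $\bm{A}(\bm{\phi})$ outside the sum, and identify an explicit stacked matrix $\bm{S}$ whose Gram matrix equals the inner factor. The only point of divergence is the full-rank justification for $\bm{S}$: the paper argues by contradiction, rewriting $\bm{S}\bm{x}=\bm{0}$ as $\bm{S}_0\hat{\bm{A}}(\bm{\theta})^{m-1}\bm{x}=\bm{0}$ for all $m$ and then invoking the standing MUSIC assumption that $\bm{S}_0=(\bm{r}_1,\ldots,\bm{r}_K)^{\mathrm{T}}$ is full rank, whereas you argue directly from the Vandermonde structure of $\bm{A}(\bm{\theta})$ together with nonsingularity of $\bm{R}_k$; both are valid under the usual distinct-angle and nonzero-gain genericity hypotheses.
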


\begin{proof}
    Using the aforementioned assumptions, $\bm{C}$ in \eqref{equ:coeff} is expressed as
    \begin{align}
        \bm{C} = \sum_{k=1}^K {\bm{V}_k}\bm{\varSigma}_k^2{\bm{V}_k}^\mathrm{H}.
    \end{align}
    Substituting \eqref{equ:SVD} and \eqref{equ:csi} to $\bm{C}$, we obtain
    \begin{align}
        \bm{C} & =\sum_{k=1}^K {\bm{H}_k}^\mathrm{H}{\bm{H}_k}
        = \sum_{k=1}^K \bm{A}(\bm{\phi}){\bm{R}_k}^\mathrm{H}\bm{A}(\bm{\theta})\bm{A}(\bm{\theta})^\mathrm{H}\bm{R}_k\bm{A}(\bm{\phi})^\mathrm{H} \notag                   \\
               & =  \bm{A}(\bm{\phi})\left(\sum_{k=1}^K {\bm{R}_k}^\mathrm{H}\bm{A}(\bm{\theta})\bm{A}(\bm{\theta})^\mathrm{H}\bm{R}_k \right) \bm{A}(\bm{\phi})^\mathrm{H}
    \end{align}
    Using $KM\times M$ matrix $\bm{S} \coloneqq \left({\bm{R}_1}^\mathrm{H}\bm{A}(\bm{\theta}),\ldots,{\bm{R}_K}^\mathrm{H}\bm{A}(\bm{\theta}) \right)^\mathrm{H}$,
    \begin{align}
        \label{equ:theo1}
        \bm{C}
        = \bm{A}(\bm{\phi})\bm{S}^\mathrm{H}\bm{S}\bm{A}(\bm{\phi})^\mathrm{H}.
    \end{align}

    Thus, this proposition essentially proves that $\bm{S}$ is full-rank.
    Based on the deduction that $\bm{S}_0$ is slim and full-rank, which is denoted in Section~\ref{sssec:MUSIC}, the above proposition is proved by indirect proof.
    We denote a diagonal matrix $\hat{\bm{A}}(\bm{\theta})$ as $\mathrm{diag}(a(\theta_1),\ldots,a(\theta_L))$.
    If $\bm{S}$ is not a full-rank matrix, a non-zero vector $\bm{x} \in \mathbb{C}^{L}$ satisfies
    \begin{align}
        \label{equ:contra}
        \bm{S}\bm{x}=\bm{0}_{KM\times 1}.
    \end{align}
    The equation~\eqref{equ:contra} is equivalent to that, for all $m=1,\ldots,M$, $\bm{x}$ satisfies
    \begin{align}
        \bm{S}_0\hat{\bm{A}}(\bm{\theta})^{m-1} \bm{x}=\bm{0}_{K\times 1}.
    \end{align}
    However, as denoted in~\cite{kotaru2015spotfi}, $\bm{S}_0$ is generally slim and full-rank, and $\hat{\bm{A}}(\bm{\theta})^{m-1}$ is regular;
    thus, this shows a contradiction.
\end{proof}

\subsection{Detail Procedure}
The detailed procedure of the BFF-based MUSIC algorithm is presented in Algorithm~\ref{alg:B_MUSIC}.
The STA estimates the CSI using a sounding frame transmitted from the AP, calculates the BFF from the CSI, and transmits the BFF to the AP.
We denote $\bm{H}_{k,i}$ as the CSI matrix at the $k$th subcarrier from the $i$th sounding frame.
We also denote the right singular matrix and subcarrier-averaged stream gain of $\bm{H}_{k,i}$ as $\bm{V}_{k,i}$ and $\bar{\bm{\varLambda}}_{i}$, respectively.
The BFF corresponding to the $i$th sounding frame includes $\left(\bm{V}_{1,i},\ldots, \bm{V}_{K,i}\right)$ and $\bar{\bm{\varLambda}}_i$.
The $\bar{\bm{\varLambda}}_i$ and $\bm{V}_{k,i}$ include quantization errors because the BFF frame is highly quantized in IEEE 802.11ac/ax~\cite{11ac,11ax}, which is detailed in Section~\ref{sssec:BFF}.

The frame capture obtains $N^\mathrm{pct}$ BFF frames transmitted from the STA and estimates multiple AoDs of the AP from the BFF frames.
For each captured BFF frame, the frame capture obtains subcarrier-averaged stream gain $\bar{\bm{\varLambda}}_i$ and the right singular matrix $\bm{V}_{k,i}$.
Using ${\bar{\bm{\varLambda}}_i}$ and ${\bm{V}_{k,i}}$, the covariance matrix $\bm{C}_i$ is calculated as
\begin{align}
    \bm{C}_i = \frac{1}{K} \sum_{k=1}^K \bm{W}_k{\bm{V}_{k,i}}\bar{\bm{\varLambda}}_i\,{\bm{V}_{k,i}}^\mathrm{H}\bm{W}_k^\mathrm{H},
\end{align}
where $\bm{W}_k$ is a diagonal matrix that compensates for the phase sift introduced at the AP.
The methods to estimate $\bm{W}_k$ are detailed in Section~\ref{ssec:calib}.
We average $\bm{C}_i$ among $N^\mathrm{pct}$ packets and use the averaged covariance matrix $\bm{C}^\mathrm{ave}$ in the following MUSIC procedure, where
\begin{align}
    \bm{C}^\mathrm{ave} = \frac{1}{N^\mathrm{pct}} \sum_{i=1}^{N^\mathrm{pct}}\bm{C}_i.
\end{align}
Following the existing CSI-based MUSIC methods~\cite{shan1985onspatial,xiong2013arraytrack}, we adopt spatial smoothing to $\bm{C}^\mathrm{ave}$.
We denote the spatial smoothing function as $f^\mathrm{smt}$ and the smoothed covariance matrix as $\bm{C}^\mathrm{smt}$, where $\bm{C}^\mathrm{smt} = f^\mathrm{smt}(\bm{C}^\mathrm{ave})$.
The spatial smoothing procedure is detailed in Section~\ref{ssec:smt}.
From the smoothed covariance matrix $\bm{C}^\mathrm{smt}$, we estimate AoDs using the general MUSIC algorithm~\cite{schmidt1986multiple}, as described in Section~\ref{sssec:MUSIC}.

Notably, the estimation of the number of the propagation paths $L$ is required in the BFF-based MUSIC algorithm as with the CSI-based MUSIC algorithm.
In this work, we assume that $L$ is given, and the number of path estimation problems is out-of-scope.
This is because the problem is not specific to BFF-based sensing.

\begin{algorithm}[t]
    \caption{BFF-based MUSIC}
    \label{alg:B_MUSIC}
    \begin{algorithmic}[1]
        \Input $N^\mathrm{pct}$ BFF frames
        \For{each packet $i$}
        \State $\bm{C}_i = \frac{1}{K} \sum_{k=1}^K \bm{W}_k{\bm{V}_{k,i}}\bar{\bm{\varLambda}}_i\,{\bm{V}_{k,i}}^\mathrm{H}\bm{W}_k^\mathrm{H}$.
        \EndFor
        \State Averaging among packets: $\bm{C}^\mathrm{ave} = \frac{1}{N^\mathrm{pct}}\sum_{n=1}^{N^\mathrm{pct}} \bm{C}_n$
        \State Spatial smoothing: $\bm{C}^\mathrm{smt} = f^\mathrm{smt}(\bm{C}^\mathrm{ave})$
        \State Obtain eigenvectors $\bm{e}_1,\ldots,\bm{e}_{M}$ of $\bm{C}^\mathrm{smt}$, where $\bm{e}_1,\ldots,\bm{e}_{M}$ is aligned in descending order of its eigenvalue.
        \State Calculate noise subspace matrix $\bm{E}_\mathrm{N} = (\bm{e}_{1},\ldots,\bm{e}_{M-L})^\mathrm{T}$.
        \State Evaluate MUSIC spectrum $1/\bm{a}(\phi)^\mathrm{H}{\bm{E}_\mathrm{N}}^{\mathrm{H}}\bm{E}_\mathrm{N}\bm{a}(\phi)$.
        \State Obtain AoDs as $L$ peaks of MUSIC spectrum.
    \end{algorithmic}
\end{algorithm}

\subsection{Calibration Procedure}
\label{ssec:calib}
To provide accurate AoD estimation, the compensation for the phase offset introduced at the AP is required~\cite{xiong2013arraytrack}.
To this end, we implemented calibration method that estimates the phase shift difference between the antenna elements.
The calibration procedure measures the BFF at the environment where the number of propagation paths is only one and the AoD is given; subsequently, the phase offset at the AP is estimated.
Specifically, the calibration procedure is as follows: the covariance matrix of the CSI matrix is estimated from the BFF; and the eigenvector of the covariance matrix with the largest eigenvalue corresponds to the phase shift of the AP.

Formally, we denote the phase offset introduced at the $n$th antenna of the AP to be $\mathrm{e}^{\mathrm{j}\tau_{n,k}}$.
The calibration procedure estimates $\mathrm{e}^{\mathrm{j}(\tau_{n,k}-\tau_{1,k})}$.
For shorthand notation, we denote a diagonal matrix $\bm{W}_k$ as
\mbox{$\mathrm{diag}\!\left(1, \mathrm{e}^{\mathrm{j}(\tau_{2,k}-\tau_{1,k})},\ldots, \mathrm{e}^{\mathrm{j}(\tau_{N,k}-\tau_{1,k})}\right)$}.
Considering the $N\times 1$ MIMO system, and given that $L=1$ and the pre-obtained AoD is $\hat{\phi}$,
the observed CSI matrix is denoted as
\begin{align}
    \bm{H}^\mathrm{obs}_k = \mathrm{e}^{\mathrm{j}\tau_{1,k}} r_k \bm{a}(\hat{\phi})\,\bm{W}_k,
\end{align}
where $r_k$ denotes the complex path gain.

The calibration procedure estimates $\bm{W}_k$ using the pre-obtained AoD $\hat{\phi}$ and BFF calculated from $\bm{H}^\mathrm{obs}_k$ as follows.
We denote the right singular matrix and subcarrier-averaged stream gain of $\bm{H}^\mathrm{obs}_k$ as $\bm{V}_k^\mathrm{obs}$ and $\bar{\bm{\varLambda}}^\mathrm{obs}$, respectively.
First, the covariance matrix of $\bm{H}^\mathrm{obs}_k$ is estimated as $\bm{V}_k^\mathrm{obs} \bar{\bm{\varLambda}}^\mathrm{obs} (\bm{V}_k^\mathrm{obs})^\mathrm{H}$.
The covariance matrix is also represented by
\begin{align}
    \label{equ:cal}
    (\bm{H}^\mathrm{obs}_k)^\mathrm{H} \bm{H}^\mathrm{obs}_k =  |r_k|^2 {\bm{W}_k}^\mathrm{H} \bm{a}(\hat{\phi})^\mathrm{H} \bm{a}(\hat{\phi})\,\bm{W}_k.
\end{align}
From \eqref{equ:cal}, the covariance matrix has $N-1$ eigenvectors with an eigenvalue of zero and an eigenvector with an eigenvalue of $|r_k|^2$, and the latter eigenvector is ${\bm{W}_k}^\mathrm{H} \bm{a}(\hat{\phi})^\mathrm{H}$.
Thus, denoting the latter eigenvector as $\bm{x} \coloneqq (1, x_2,\ldots,x_N)^\mathrm{T}$,
$\bm{W}_k$ is estimated as
\begin{align}
    \bm{W}_k = \mathrm{diag}(\bm{a}(\hat{\phi}))^\mathrm{H}\mathrm{diag}(\bm{x})^\mathrm{H}.
\end{align}
In the MUSIC algorithm, which is implemented after the calibration,  $\bm{W}_k\bm{V}_k$ is used instead of $\bm{V}_k$.

\subsection{Spatial Smoothing}
\label{ssec:smt}
As denoted in \cite{shan1985onspatial,xiong2013arraytrack}, when the multipath signals are phase-synchronized with each other,
the distinct multipath signals are recognized as one superposed signal, resulting in false peaks in the MUSIC spectrum.
To address the problem, we adopt spatial smoothing~\cite{shan1985onspatial,xiong2013arraytrack},
which splits the AP's antenna array into multiple sub-antenna arrays.
Given that $M'$ antennas are integrated into a sub-antenna array,
the antenna array with $M$ antennas are considered $M-M'+1$ sub-antenna arrays.
The covariance matrix is calculated for each sub-antenna array in the spatial smoothing procedure, and the covariance matrices are averaged.
Specifically, given the covariance matrix for the $j$th sub-antenna array as ${\bm{C}^\mathrm{sub}}_{j}\in \mathbb{C}^{M'\times M'}$, ${\bm{C}^\mathrm{sub}}_{j}$ is a submatrix of $\bm{C}$, where $1,\ldots,j-1, j+M',\ldots,M$ rows and columns are removed from $\bm{C}$.
The averaged covariance matrix $\bm{C}^\mathrm{smt} \in \mathcal{C}^{M'\times M'}$ is obtained as
\begin{align}
    \bm{C}^\mathrm{smt} = \frac{1}{M-M'+1}\sum_{j=1}^{M-M'+1} \bm{C}^\mathrm{sub}_{j}.
\end{align}
The averaged covariance matrix $\bm{C}^\mathrm{smt}$ is used for estimating the noise subspace vectors, instead of the original covariance matrix $\bm{C}$.

\section{Numerical Evaluation}
\label{sec:num_evaluation}
Because the ground-truth multiple AoDs generally cannot be measured in a real-world environment, we examined the capacity of the BFF-based MUSIC to estimate multiple AoDs using a numerical evaluation.
Moreover, in the extensive experimental evaluations in real-world environments provided in Section~\ref{sec:exp_evaluation},
we evaluated the accuracy of the AoD estimation, assuming that only the direct path exists.

\subsection{Setup}
\label{ssec:nume_setup}
Fig.~\ref{fig:setup} illustrates the system, which comprises an AP, an STA, and a reflection point,
resulting in two different propagation paths between an antenna element of the AP and that of the STA---a direct path and an indirect path caused by the reflection point.
The STA and reflection point exist at (0\,m, 10\,m) and (5.5\,m, 3\,m), respectively, whereas the AP exists at either of 11 points on the x-axis.
Specifically, the $n_\mathrm{a}$th AP's position is denoted by $(n_\mathrm{a}-5\,\mathrm{m},0\,\mathrm{m})$, where $0\leq n_\mathrm{a} \leq 10$.
The AP and STA are equipped with uniform array antennas.
Each of the antenna arrays contains four antenna elements that are parallel to the x-axis.

We assume free-space propagation, wherein the indirect paths are decayed by 0.3 of the amplitude, and ignore the effect of the reflection more than once.
The CSI estimation is emulated by adding Gaussian noise to ground-truth CSI matrix $\bm{H}_k$.
Specifically, the estimated CSI at the $k$th subcarrier is denoted as
\begin{align}
    \bm{H}^\mathrm{obs}_k = \bm{H}_k + \bm{N},
\end{align}
where $\bm{N}$ is an $M \times N$ complex matrix whose real and imaginary parts of the elements follow a Gaussian distribution with mean 0 and variance $\sigma^2/2$.
It should be noted that $\sigma^2$ is the noise power at each antenna element.
We calculate $\bm{H}^\mathrm{obs}_k$ for each subcarrier $k$ and then obtain the $\bm{V}_k$ for each subcarrier and subcarrier-averaged stream gain $\bar{\bm{\varLambda}}$, by following the procedure denoted in Section~\ref{sssec:BFF}.
Specifically, we select the quantization step size $\Delta$ of $\pi/32$\,rad for the quantization of $\bm{V}_k$,\footnote{The quantization step size $\Delta$ of $\pi/32$\,rad is one of the quantization step sizes defined in IEEE 802.11ac and used in the BFF procedure in commercial APs, ASUS RT-AC86U and Buffalo WXR-5700AX7S.\label{foot:qss}}
resulting in the $4\times 4$ right singular matrix $\bm{V}_k$ being represented by only 60\,bit.
Additionally, as defined in the IEEE 802.11ac~\cite{11ac}, the subcarrier-averaged stream gain $\bar{\bm{\varLambda}}$ are quantized with a quantization step size of 0.25\,dB.

Moreover, to assess the error of multiple AoD estimations, we swap the order of the estimated AoDs to minimize the error between the estimated AoDs and ground-truth AoDs; subsequently, the error is calculated from the swapped versions of the estimated and ground-truth AoDs.
The detailed parameters are as follows: the distance of each antenna element is 25\,mm, the number of subcarriers is 52, the bandwidth is 20\,MHz, the center frequency is 5.18\,GHz, the number of CSIs used for each AoD estimation $N^\mathrm{pct}$ is ten, and the number of antenna elements in each sub-antenna array $M'$ is two.

\begin{figure}[!t]
    \centering
    \includegraphics[width=0.4\textwidth]{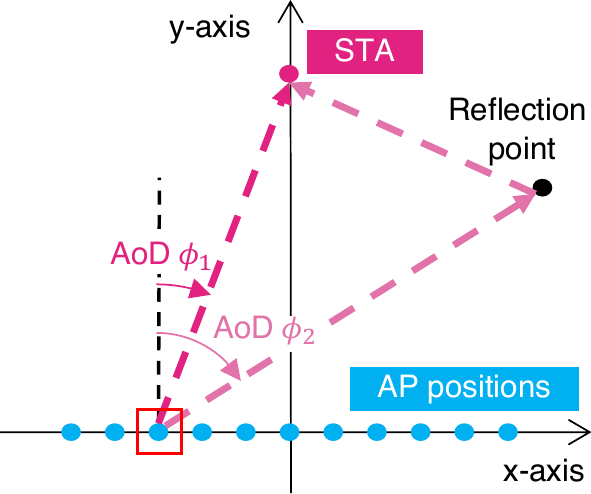}
    \caption{Numerical evaluation environment.
        The STA and reflection point exist at (0\,m, 10\,m) and (5.5\,m, 3\,m) in a two-dimensional space, respectively.
        The AP exists at either of 11 points denoted by blue dots.
        Color-dots lines indicate two propagation paths when AP exists on a point surrounded by a red square.
    }
    \label{fig:setup}
\end{figure}

\subsection{Result}
Fig.~\ref{fig:neu_result_music} shows an example of the MUSIC spectrum function $g(\phi)$ of the BFF- and CSI-based MUSIC algorithms, respectively.
The results denoted in Fig.~\ref{fig:neu_result_music} are obtained with the setting that the signal-to-noise ratio (SNR) is 20\,dB, and the AP exists at ($-3$\,m, 0\,m), which is the AP position surrounded by the red square in Fig.~\ref{fig:setup}.
The two peaks of the MUSIC spectrum function indicate the two estimated AoDs.
The estimated AoDs of the BFF-based MUSIC match with the ground-truth AoDs, as well as that of the CSI-based MUSIC.

Table~\ref{tab:median_nume} shows the median of the absolute error of the AoD estimation by the CSI- and BFF-based MUSIC for each SNR.
Regardless of the SNR, the error of the CSI-based MUSIC is lower than or equivalent to that of the BFF-based MUSIC.
This is because the BFF is highly quantized; specifically, the $4\times 4$ right singular matrix is represented by only 60\,bit.
However, the difference in the error between the two sensing methods is trivial.
Specifically, to estimate the AoDs of the direct and indirect paths, the difference is smaller than 0.03\textdegree\,and 0.4\textdegree, respectively.
Thus, we can conclude that the BFF-based MUSIC accurately estimates multiple AoDs; moreover, the accuracy of the BFF-based MUSIC is comparable to that of the CSI-based MUSIC.
If the BFF is not quantized, the result of the AoD estimation from the CSI and BFF matches perfectly.

\begin{figure}[!t]
    \centering
    {\includegraphics[width=0.44\textwidth]{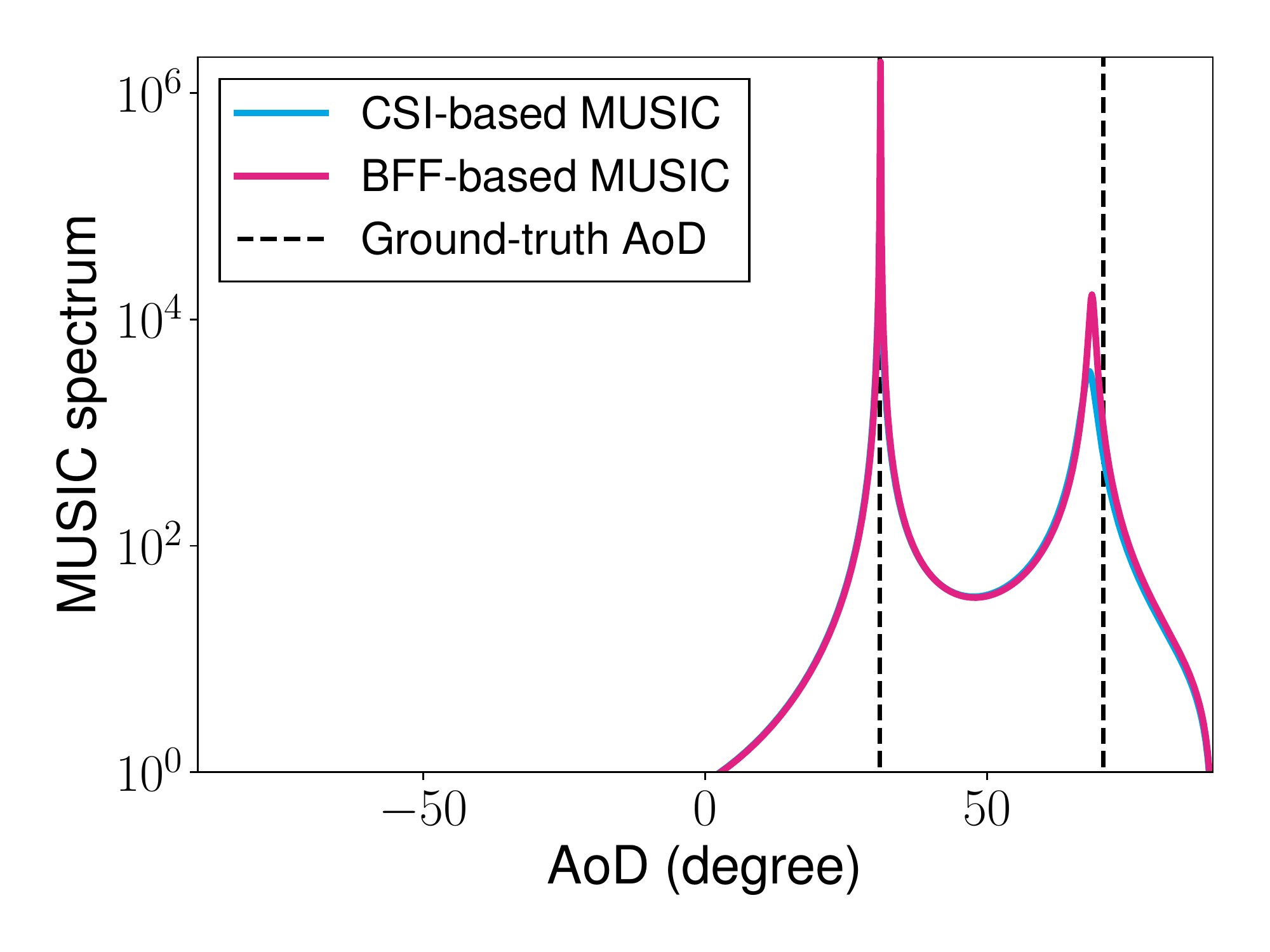}}
    \caption{MUSIC spectrum function of BFF- and CSI-based MUSIC obtained in numerical evaluation.
        The two peaks of the function indicate the two estimated AoDs.
    }
    \label{fig:neu_result_music}
\end{figure}

\begin{table}[t!]
    \caption{
        Median of absolute error of AoD estimation by CSI- and BFF-based MUSIC for each SNR.
    }
    \centering
    \scalebox{1.}{
        \begin{tabular}{cccccc}
            \toprule
            SNR    & \multicolumn{2}{c}{CSI} & \multicolumn{2}{c}{BFF}                                    \\
                   & Direction               & indirection             & Direction       & indirection    \\
                   & path                    & path                    & path            & path           \\
            \midrule
            5\,dB  & 0.11\textdegree         & 2.4\textdegree          & 0.13\textdegree & 2.8\textdegree \\
            10\,dB & 0.09\textdegree         & 1.0\textdegree          & 0.09\textdegree & 1.1\textdegree \\
            20\,dB & 0.06\textdegree         & 0.2\textdegree          & 0.09\textdegree & 0.3\textdegree \\
            \bottomrule
        \end{tabular}
    }
    \label{tab:median_nume}
\end{table}

\section{Experimental Evaluation}
\label{sec:exp_evaluation}
This study evaluated the accuracy of BFF- and CSI-based MUSIC algorithms in various real-world environments,
where the line-of-sight (LoS) path between the AP and STA exists.
Notably, this evaluation is based on the assumption that the number of propagation paths is one (i.e., only the direct path exists), and the ground-true AoD is defined as the AoD of the LoS path.
This assumption was adopted because we cannot measure the ground-truth AoDs of the reflection paths in the real-world environment.

Experimental evaluations were performed in three real-world scenarios: indoor, outdoor, and semi-outdoor scenarios.
The indoor, outdoor, and semi-outdoor scenarios differ in terms of the effect of the reflection paths.
Specifically, the received power caused by the reflection paths in the indoor scenario is generally larger than the outdoor and semi-outdoor scenarios.
The outdoor and semi-outdoor scenarios differ in terms of the method to vary AoD.
In the outdoor scenario, the position and orientation of the antenna array of the AP are fixed, and the AoD only depends on the position of the STA.
However, in the semi-outdoor scenario, the AP and STA are fixed, and the AoD only depends on the orientation of the AP's antenna array.

\subsection{Setup}
\noindent\textbf{Experimental equipment:}
The experimental system consists of an AP and STA equipped with three and two antennas, resulting in the $2\times 3$ CSI matrix.
As shown in Fig.~\ref{fig:AP_setup}, the antenna elements of the AP are linearly aligned, where the distance of the conservative antenna elements is 25\,mm.
The communication protocol, the wireless channel, the bandwidth, and the number of subcarriers are IEEE 802.11ac, 104ch, 20\,MHz, and 52, respectively.
Moreover, ASUS RT-AC86U is used for the AP and STA.
The detailed parameters of the MUSIC algorithm are as follows: the number of CSIs or BFFs used for each AoD estimation $N^\mathrm{pct}$ is ten, and the number of antenna elements in each sub-antenna array $M'$ is two.

\noindent\textbf{BFF estimation:}
Notably, to provide fair comparisons between CSI- and BFF-based sensing,
we used a firmware modification~\cite{gringoli2019free} to extract CSI from the AP and calculate BFF from the extracted CSI.
Specifically, assuming the channel reciprocity, we emulated the CSI measured at the STA as the transpose of the CSI measured at the AP.
From the CSI, the corresponding BFF is calculated following the IEEE 802.11ac standard as described in Section~\ref{sssec:BFF}.

Because the shape of CSI is $2\times 3$, the right singular matrix $\bm{V}_k$ is represented by 12 angles with the quantization step size $\Delta$.
Unless otherwise noted, this evaluation used $\Delta$ of $\pi/32$\,rad, resulting in a $2\times 3$ complex matrix ${\bm{V}_k}^\mathrm{H}$ represented by 30\,bits.\footnotemark[3].
Additionally, as defined in the IEEE 802.11ac~\cite{11ac}, the subcarrier-averaged stream gain $\bar{\bm{\varLambda}}$ were quantized with a quantization width of 0.25\,dB.

\begin{figure}[!t]
    \centering
    \includegraphics[width=0.3\textwidth,page = 4]{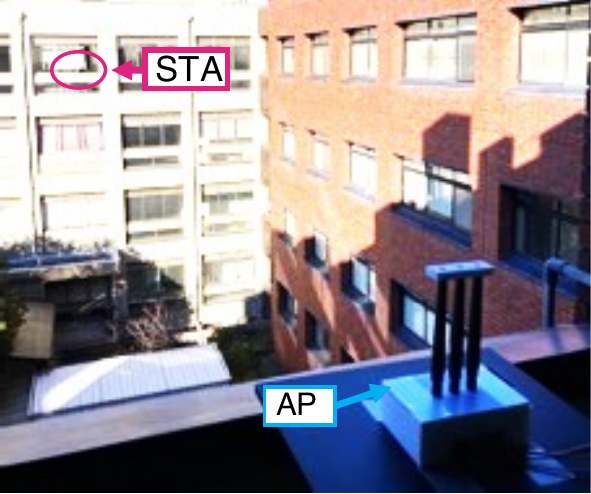}
    \caption{Snapshot of AP.
        Three antennas are linearly aligned with 25\,mm of the space between antennas.}
    \label{fig:AP_setup}
\end{figure}

\noindent\textbf{Experimental scenario:}
The experimental evaluation was performed on three scenarios: outdoor, semi-outdoor, and indoor scenarios.
An LoS path exists between the AP and STA in the three scenarios.
For all the scenarios, the CSIs and corresponding BFFs were obtained at multiple arrangements regarding the AP and STA, where the ground-truth AoD differs by the arrangement.
Regardless of the scenario, the AP captures approximately 850 packets from the STA at each equipment arrangement and estimates CSI and BFF for each captured packet.

\begin{figure}[!t]
    \centering
    \subfloat[Equipment layout.]{\includegraphics[width=0.25\textwidth, page=1]{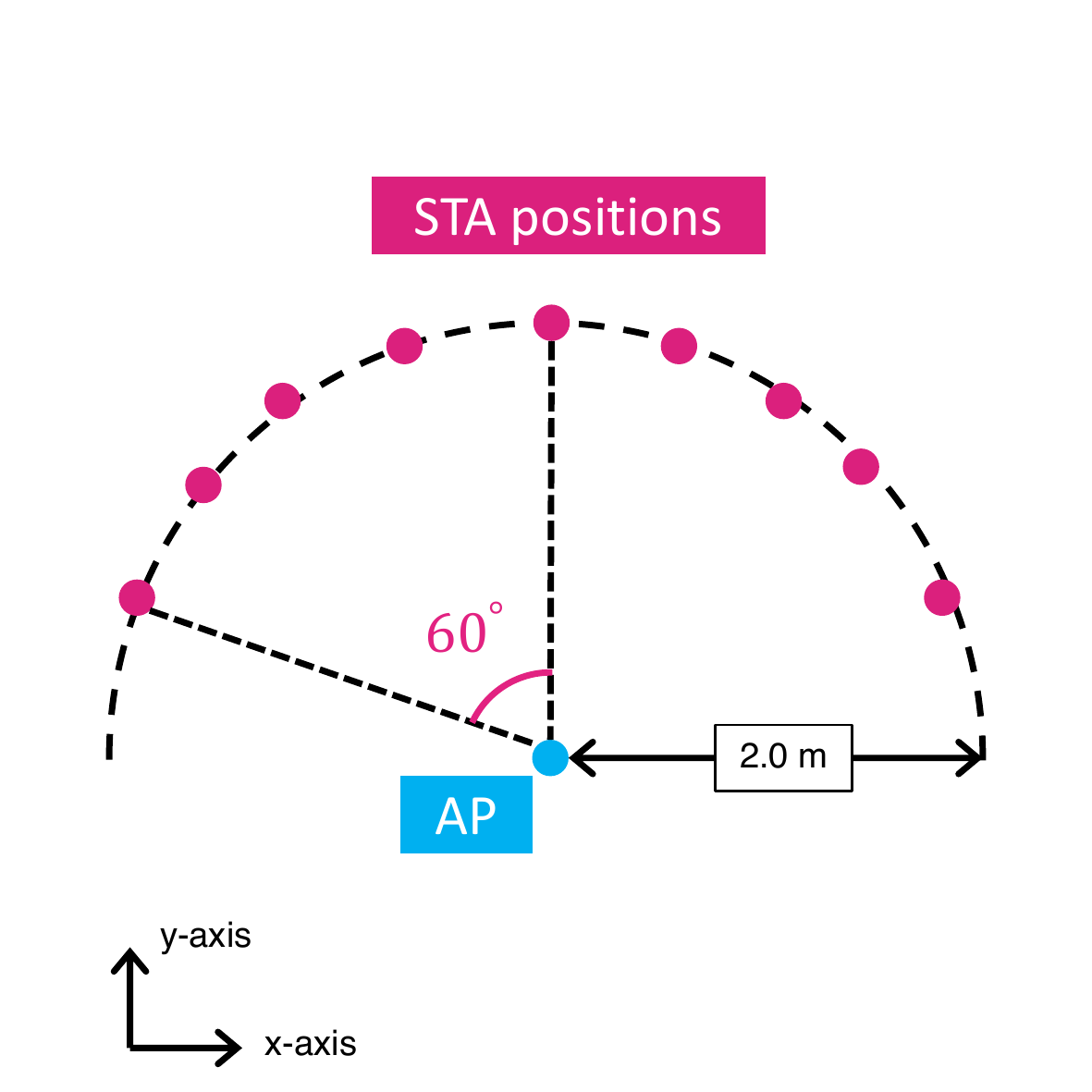}}
    \subfloat[Snapshot.]{\includegraphics[width=0.25\textwidth,page = 6]{photo.pdf}}\\
    \caption{Outdoor experimental scenario.
        STA is placed at either of the nine red points.
        AP and STA are located at a height of 0.9\,m.
    }
    \label{fig:outdoor_expsetup_A}
\end{figure}

\begin{figure}[!t]
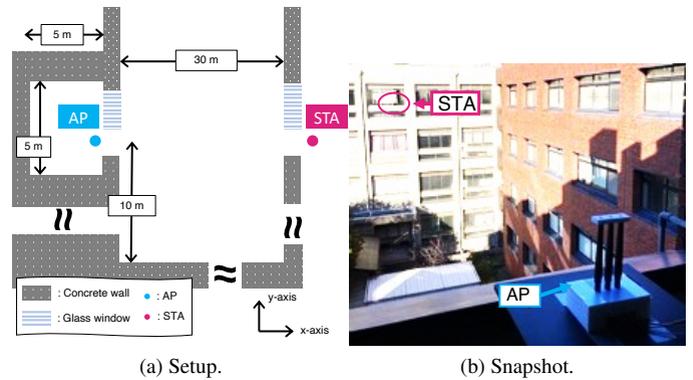

    \centering
    \subfloat[Setup.]{\includegraphics[width=0.25\textwidth, page=2]{exp_setup.pdf}}
    \subfloat[Snapshot.]{\includegraphics[width=0.25\textwidth,page = 1]{photo.pdf}}\\
    \caption{Semi-outdoor experimental scenario.
        AP and STA are located in different rooms on the fourth floor, where the LoS path exists through open windows.
        The height of AP and STA from the floor is 0.9\,m, and that of the rooms is 3.0\,m.}
    \label{fig:outdoor_expsetup_B}
\end{figure}

\begin{figure}[!t]
    \centering
    \subfloat[Setup.]{\includegraphics[width=0.45\textwidth]{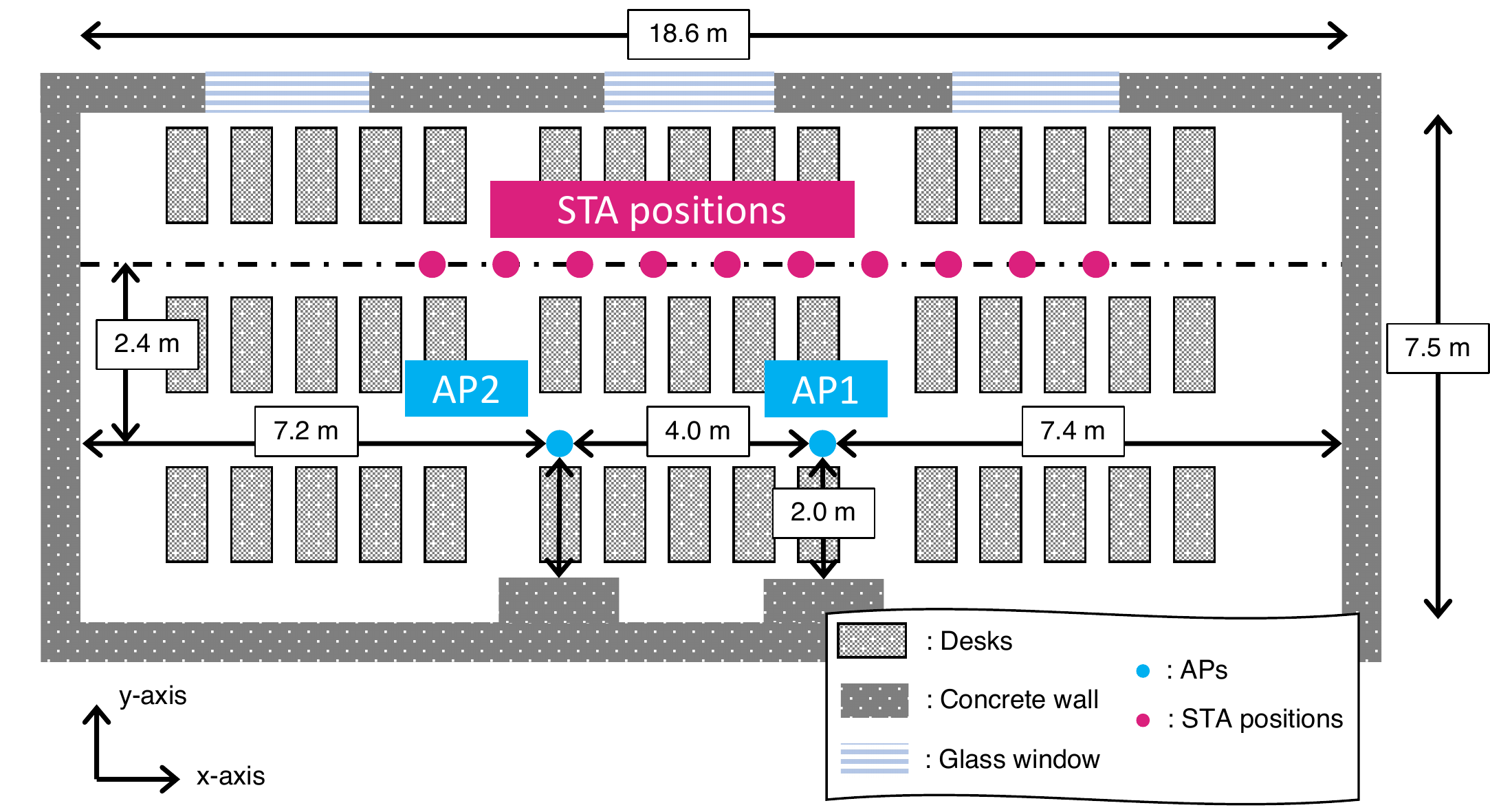}}\\
    \subfloat[Snapshot.]{\includegraphics[width=0.3\textwidth,page = 5]{photo.pdf}}\\
    \caption{Indoor experimental scenario.
        STA is placed at either of the ten red points, whereas two APs are located at the blue points.
        The height of AP and STA are 0.9\,m.
        The height, width, and depth of the room are 3.0\,m, 7.5\,m, and 18.6\,m, respectively.
    }
    \label{fig:indoor_expsetup}
\end{figure}

\begin{figure}[!t]
    \centering
    \subfloat[Wiring diagram.]{\includegraphics[width=0.2\textwidth]{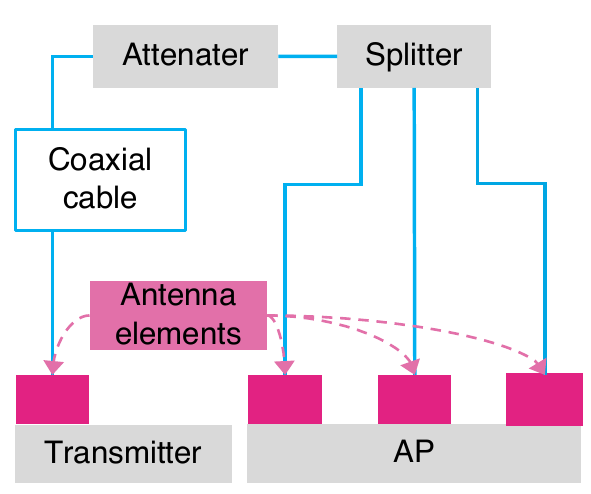}}
    \subfloat[Snapshot.]{\includegraphics[width=0.2\textwidth, page=2]{photo.pdf}}
    \caption{Setup of calibration procedure.
        The lengths of the coaxial cables are adjusted so that the phases at the three antennas of the AP are the same.}
    \label{fig:calib_setup}
\end{figure}

Fig.~\ref{fig:outdoor_expsetup_A} shows the setup and snapshot of the outdoor scenario.
The STA is placed at either of the nine positions on the circle with a radius of 2.0\,m centered on the AP.
The orientations of the antenna array of the AP and STA are fixed parallel to the x-axis.
Thus, the AoD only depends on the position of the STA, resulting in the AoD being either $-$60\textdegree\, to 60\textdegree\, in 15\textdegree\, increments.

Fig.~\ref{fig:outdoor_expsetup_B} shows the semi-outdoor experimental scenario and its snapshot.
The AP and STA are fixed in different rooms, where the LoS exists through the open windows.
In the semi-outdoor environment, the orientation of the antenna array of the AP is changed, whereas the orientation of the antenna array of the STA is fixed parallel to the y-axis.
Thus, the AoD only depends on the orientation of the AP's antenna array.
Specifically, the orientation of the AP is changed so that the AoD is either of $-$60\textdegree\, to 60\textdegree\, in 15\textdegree\, increments.

Figs.~\ref{fig:indoor_expsetup} shows the indoor experimental scenario and its snapshot.
The two APs and STA are located in a lecture room, where the orientation of the antenna array of the APs and STA is fixed parallel to the y-axis.
While the APs are fixed, the STA is located at either of ten positions on a line parallel to the x-axis, where the distance between the line and AP is 2.4\,m.
Thus, the AoD only depends on the STA's position.
In the scenario, the AoD is varied from approximately $-$60\textdegree\, to 60\textdegree.
It should be noted that the AoD estimation is conducted for each AP.

\noindent\textbf{Calibration procedure:}
Fig.~\ref{fig:calib_setup} shows the setup of the calibration procedure.
The AP's antennas and a transmitter antenna are connected via coaxial cables.
Because the length of the coaxial cables between the antenna of the AP and the transmitter are the same among the three antennas of the AP,
the phase of the AP's antennas are considered to be the same and there exists only a direct wave (i.e., $L=1$ and $\hat{\phi}=0$).
We captured approximately 1{,}000 packets in the environment, obtained CSIs, and calculated BFFs.
From the BFFs, we estimated the calibration matrix $\bm{W}$ as detailed in Section~\ref{ssec:calib}.

\subsection{Results}

\begin{figure}[!t]
    \centering
    \includegraphics[width=0.4\textwidth]{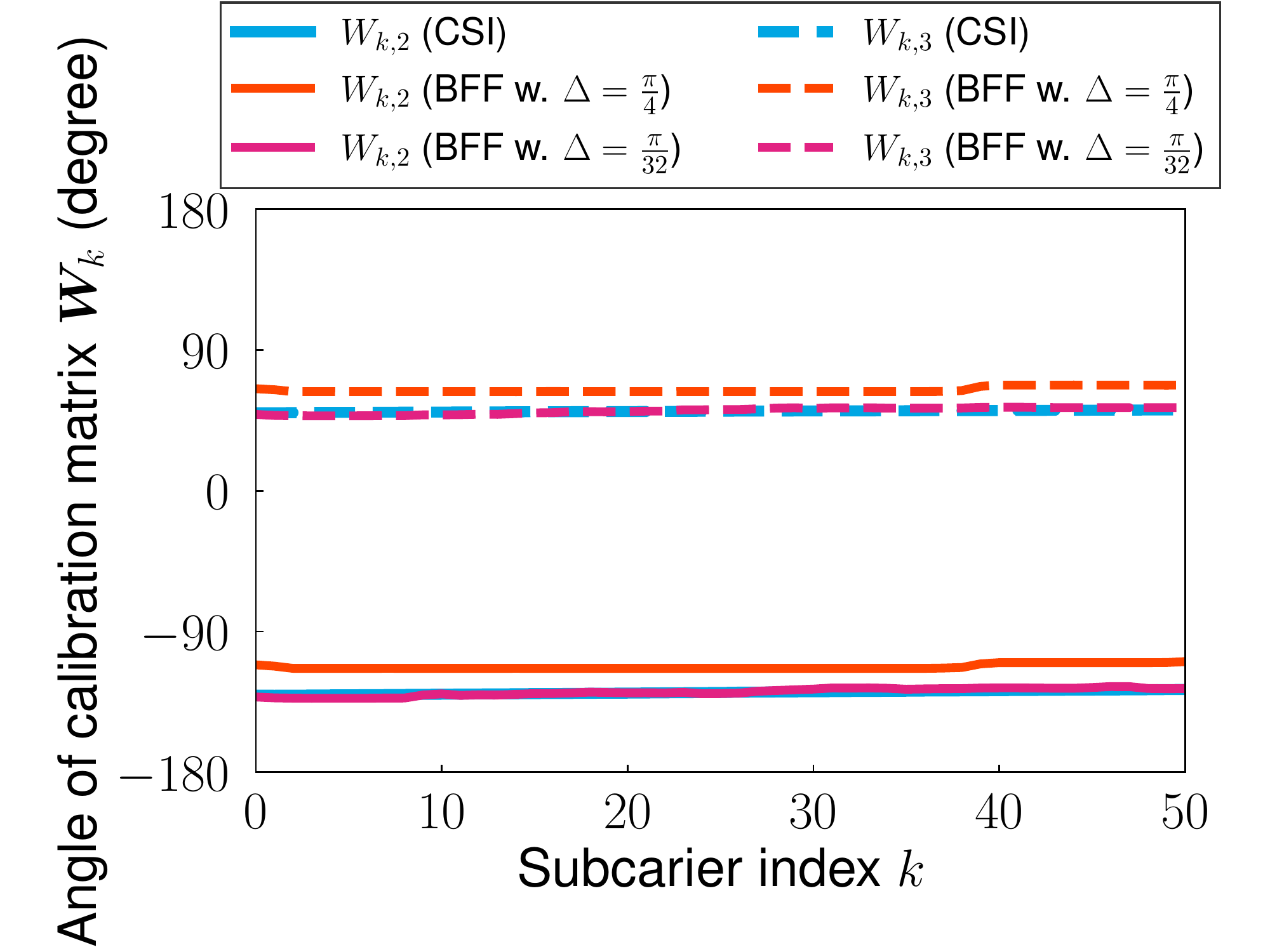}
    \caption{Angle of estimated calibration matrix $\bm{W}_k$ from BFF and CSI, respectively.
        Calibration matrix $\bm{W}_{k}$ is represented by $\mathrm{diag}(1,W_{k,2},W_{k,3})$.
    }
    \label{fig:cali_result}
\end{figure}

\textbf{Results of calibration procedure:}
Fig.~\ref{fig:cali_result} shows the angle of the estimated calibration matrix $\bm{W}_k$ from the BFF with a quantization step size of $\pi/32$\,rad and $\pi/4$\,rad, and CSI, respectively.
As denoted in Section~\ref{ssec:calib}, the calibration matrix $\bm{W}_k$ is denoted as $\mathrm{diag}(1,W_{k,2},W_{k,3})$.
Thus, Fig.~\ref{fig:cali_result} depicts the argument of $W_{k,2}$ and $W_{k,3}$, respectively.
When the quantization step size is $\pi/32$\,rad, the estimated arguments from the BFF match that of the CSI;
Specifically, the difference between the arguments estimated from the BFF and CSI is smaller than 2.3\textdegree\, regardless of the subcarrier index.
Thus, we can conclude that when the quantization step size is small, the results of the BFF-based calibration accurately matches that of the CSI-based calibration.

As the quantization step size is increased,
the difference in the estimated arguments between the BFF and CSI increases because of the quantization error induced in the BFF.
Specifically, when the quantization step size is $\pi/4$\,rad, the median and maximum difference between the estimated arguments from the BFF and that from CSI is 15.4\textdegree\,and 25.0\textdegree, respectively.
However, the following evaluations reveal that, even when the quantization step size is large, the BFF-based MUSIC achieved comparable AoD estimation accuracy to the CSI-based method.

\begin{table}[t!]
    \caption{
        Median of absolute error of AoD estimation by CSI- and BFF-based MUSIC for each scenario.
    }
    \centering
    \scalebox{1.}{
        \begin{tabular}{ccc}
            \toprule
            Scenario      & CSI             & BFF             \\
            \midrule
            Outdoor       & 10.3\textdegree & 10.3\textdegree \\
            Semi-outdoor  & 10.2\textdegree & 9.8\textdegree  \\
            Indoor w. AP1 & 17.9\textdegree & 17.1\textdegree \\
            Indoor w. AP2 & 14.7\textdegree & 14.8\textdegree \\
            \bottomrule
        \end{tabular}
    }
    \label{tab:median}
\end{table}

\begin{figure}[!t]
    \centering
    \subfloat[Outdoor scenario.]{\includegraphics[width=0.23\textwidth]{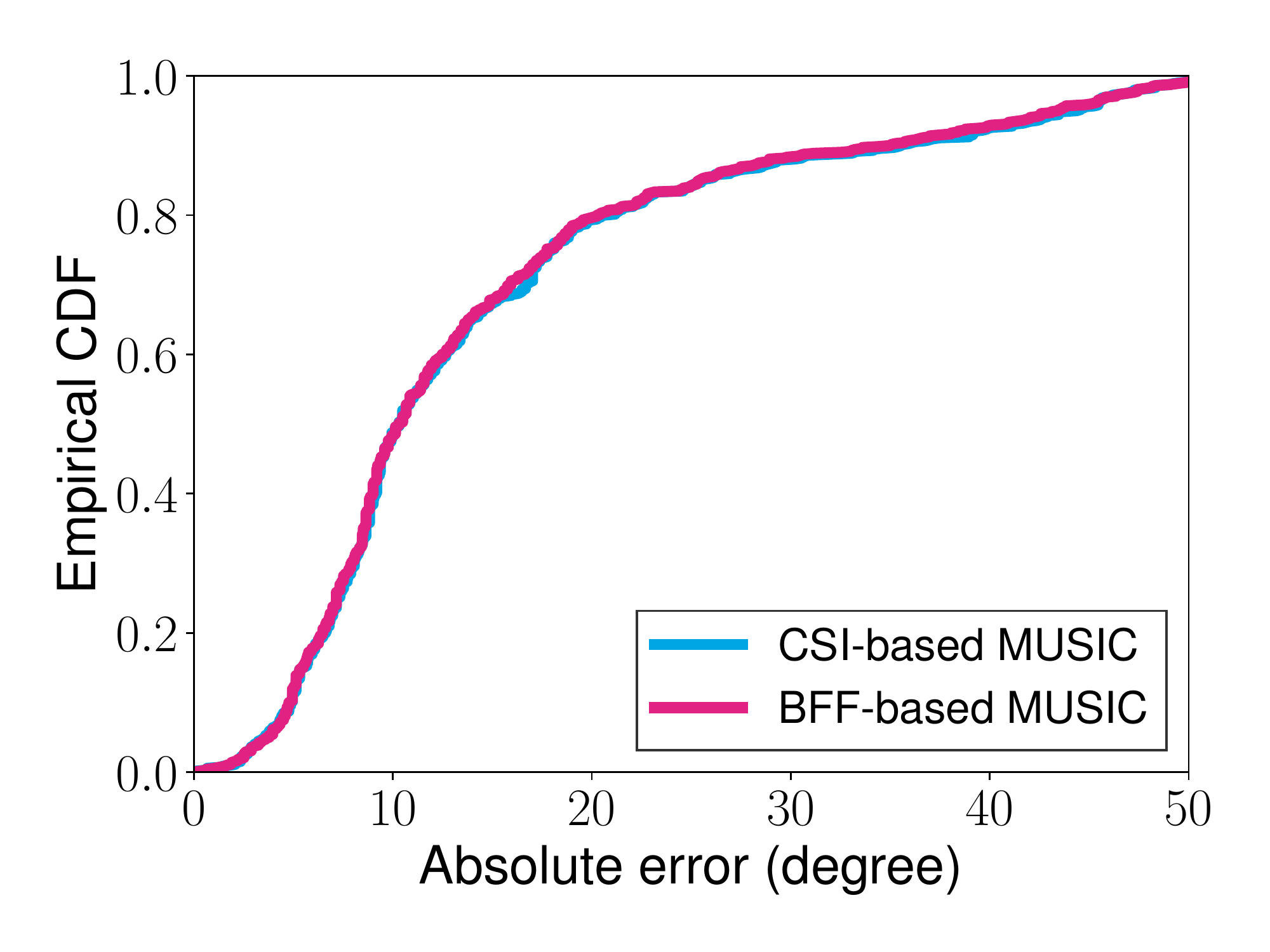}}
    \subfloat[Semi-outdoor scenario.]{\includegraphics[width=0.23\textwidth]{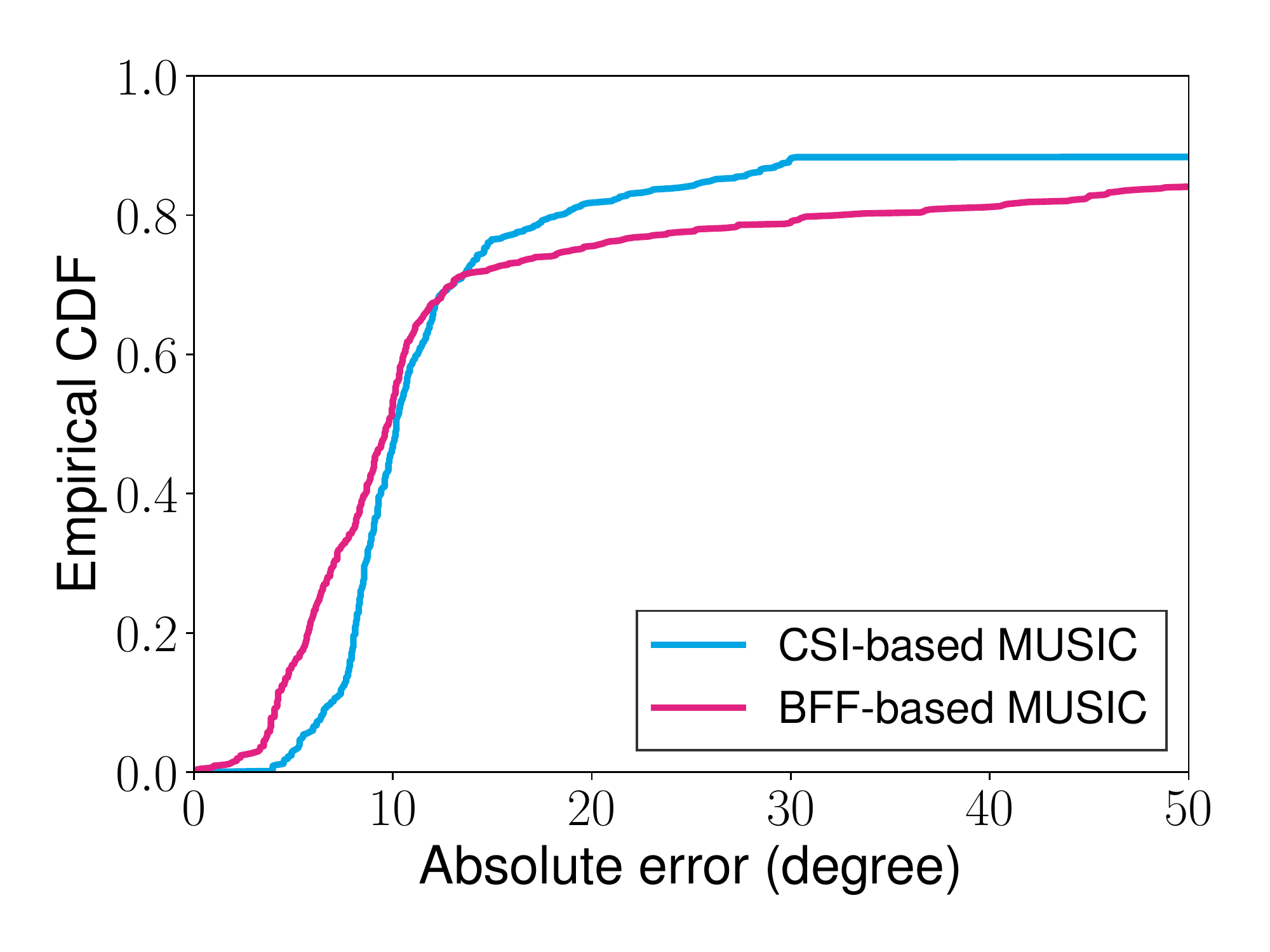}}\\
    \subfloat[Indoor scenario w. AP1.]{\includegraphics[width=0.23\textwidth]{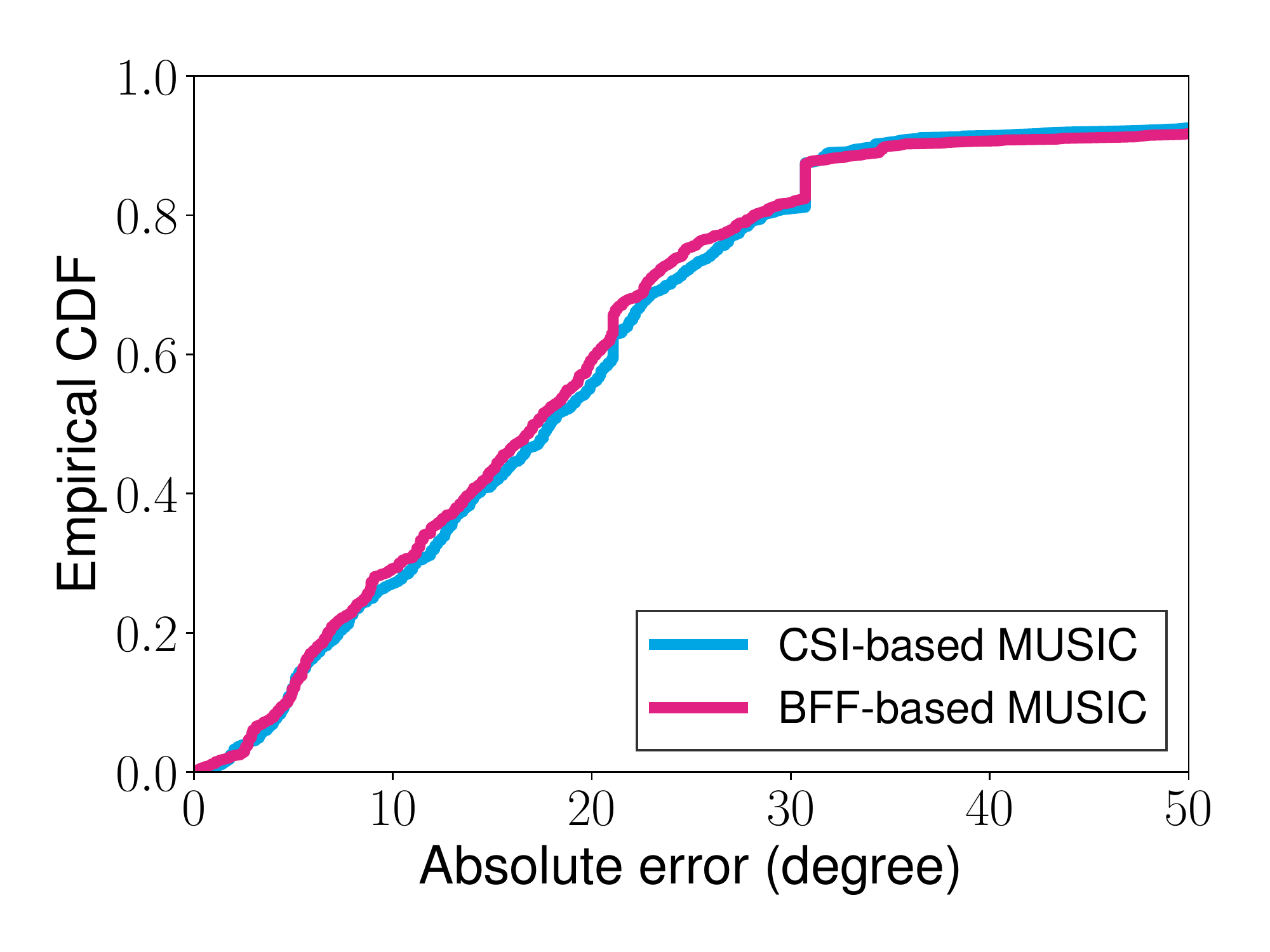}}
    \subfloat[Indoor scenario w. AP2.]{\includegraphics[width=0.23\textwidth]{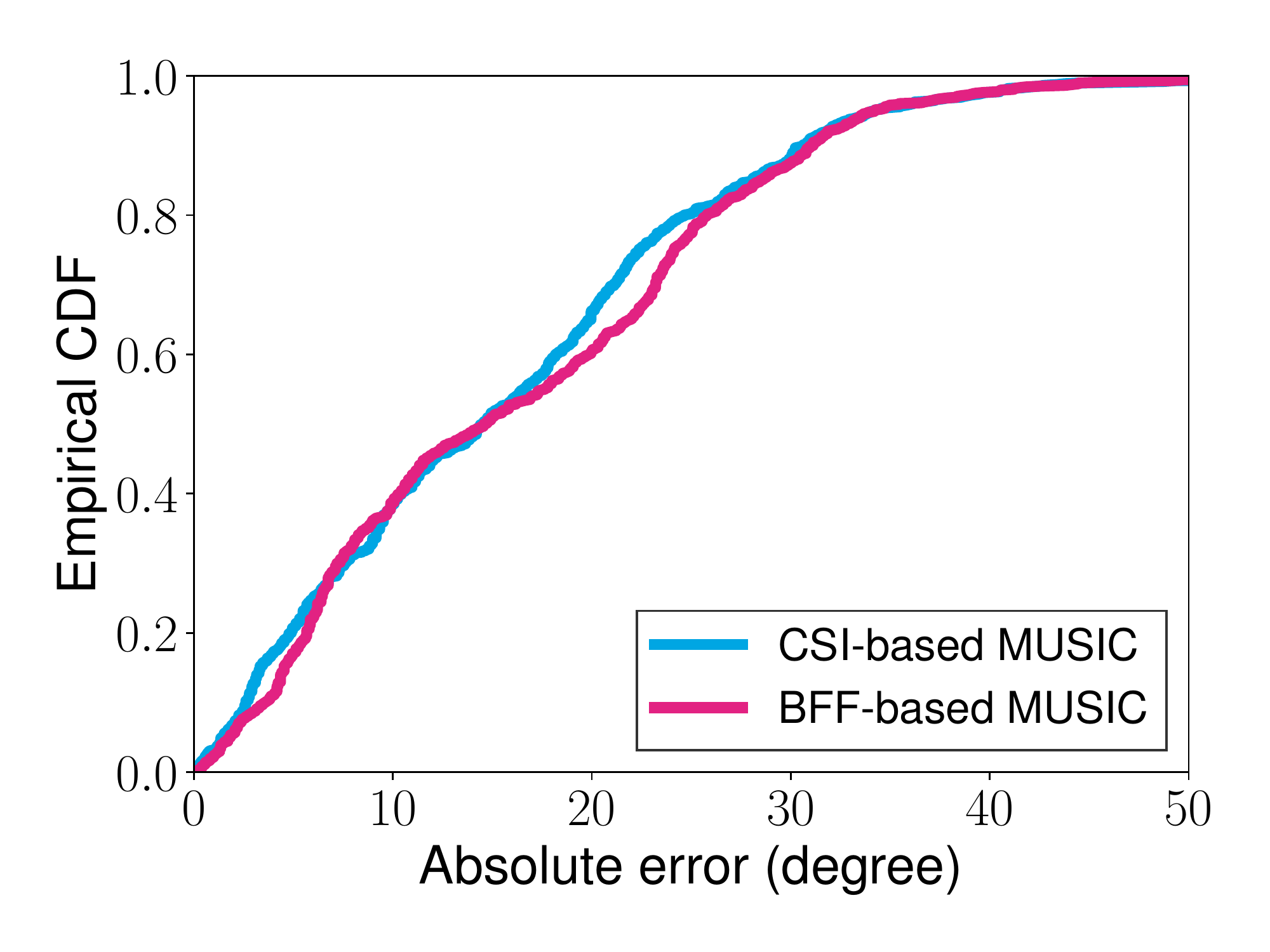}}
    \caption{Empirical CDF of absolute error of AoD estimation by CSI-based MUSIC and BFF-based MUSIC for each scenarios.}
    \label{fig:CDFs}
\end{figure}

\textbf{AoD estimation error comparison:}
Fig.~\ref{fig:CDFs} shows the empirical cumulative distribution function (CDF) of the AoD estimation error using BFF- and CSI-based MUSIC, respectively.
In Fig.~\ref{fig:CDFs}, the error of the BFF-based MUSIC is comparable to that of the CSI-based MUSIC regardless of the experimental scenarios.
Table~\ref{tab:median} lists the error medians of the AoD estimation by CSI-based MUSIC and BFF-based MUSIC in the three scenarios.
As shown in Fig.~\ref{fig:CDFs}, regardless of the scenario, the errors of the BFF- and CSI-based MUSIC are comparable.
Thus,  the BFF-based MUSIC achieves comparable AoD estimation accuracy to CSI-based MUSIC,
although the BFF is highly quantized, specifically, the $2\times 3$ right singular matrix is represented by only 30\,bit, and the subcarrier-averaged stream gain are represented with a quantization step size of 0.25\,dB.

Additionally, the error of the CSI-based MUSIC in this evaluation is comparable to the previously reported value~\cite{tian2016pila}, that is approximately 10\textdegree.
Although the error of the AoD estimation highly depends on the experimental environments and equipment (e.g., the antenna characteristics, the propagation environment, and the placement of the AP and STA),
the similarity of the error between this paper and the existing report~\cite{tian2016pila} indicates that the implementation in this study is adequate.

Upon comparing the estimation errors between the scenarios, the errors of the indoor scenarios are found to be higher than those of the outdoor and semi-outdoor scenarios for both the CSI- and BFF-based MUSIC.
This is because the number of propagation paths at the indoor scenario is larger than that at the outdoor and semi-outdoor scenarios.
Because we assumed $L=1$ in this experimental evaluation, the larger multipath degrades the AoD estimation accuracy.

\textbf{Impact of quantization step size:}
Fig~\ref{fig:quantize} shows the impact of the quantization step size on the AoD estimation error of the BFF-based MUSIC.
In IEEE 802.11ac~\cite{11ac}, four quantization step sizes $\Delta$ of $\bm{V}_k$ are defined, namely $\pi/4$\,rad, $\pi/8$\,rad, $\pi/16$\,rad, and $\pi/32$\,rad.
Regardless of the experimental scenarios, the impact of the quantization step size on the median of error is less than 3.0\textdegree.
Moreover, regardless of the experimental scenarios and the quantization step size, the BFF-based MUSIC achieved a comparable AoD estimation error to the CSI-based methods.
Thus, even when the AP adopts the largest quantization step size defined in IEEE 802.11ac, the BFF-based MUSIC achieves comparable AoD estimation accuracy to CSI-based MUSIC.

\begin{figure}[!t]
    \centering
    \subfloat[Outdoor and semi-outdoor scenario.]{\includegraphics[width=0.23\textwidth]{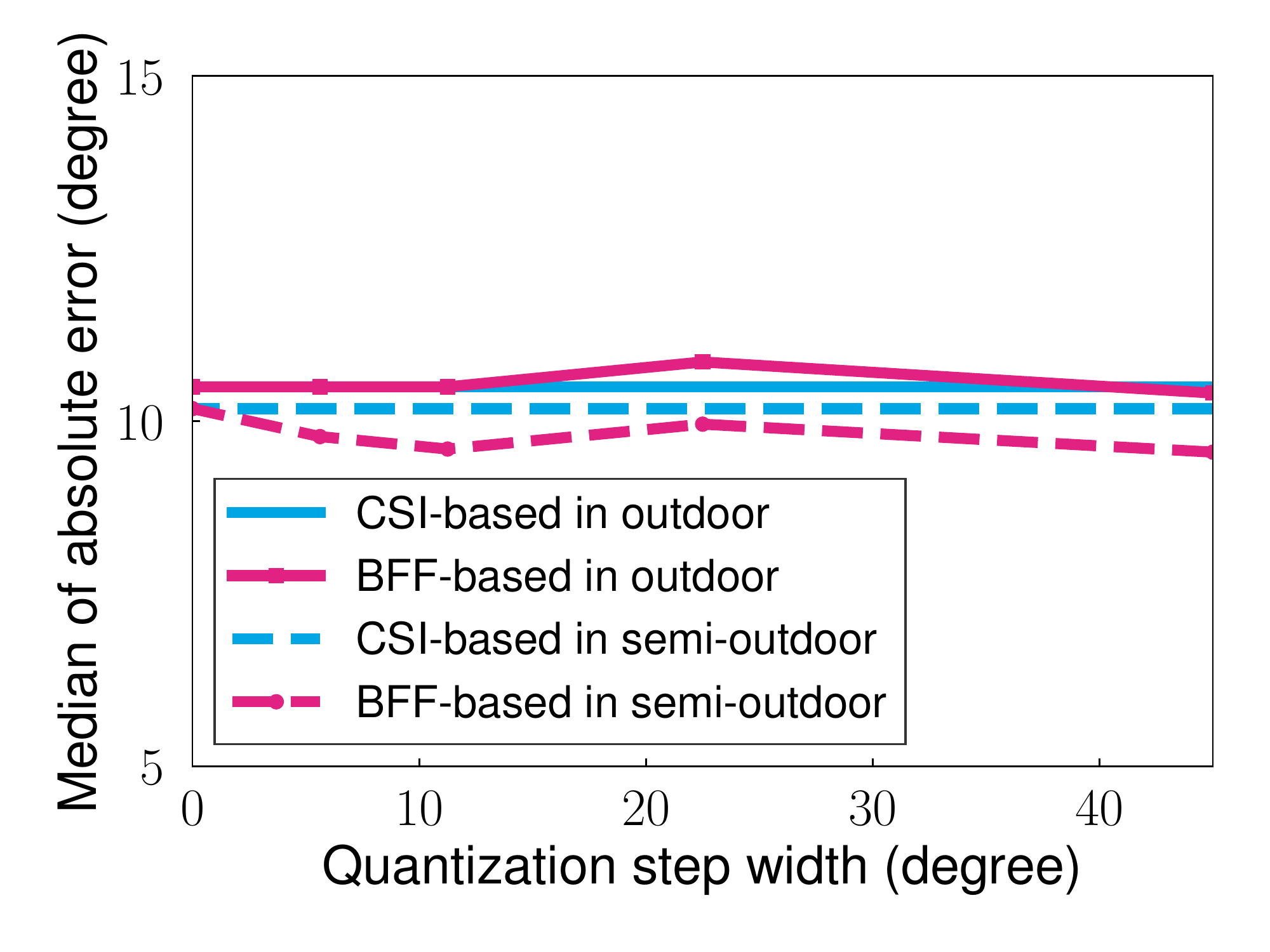}}
    \subfloat[Indoor scenario.]{\includegraphics[width=0.23\textwidth]{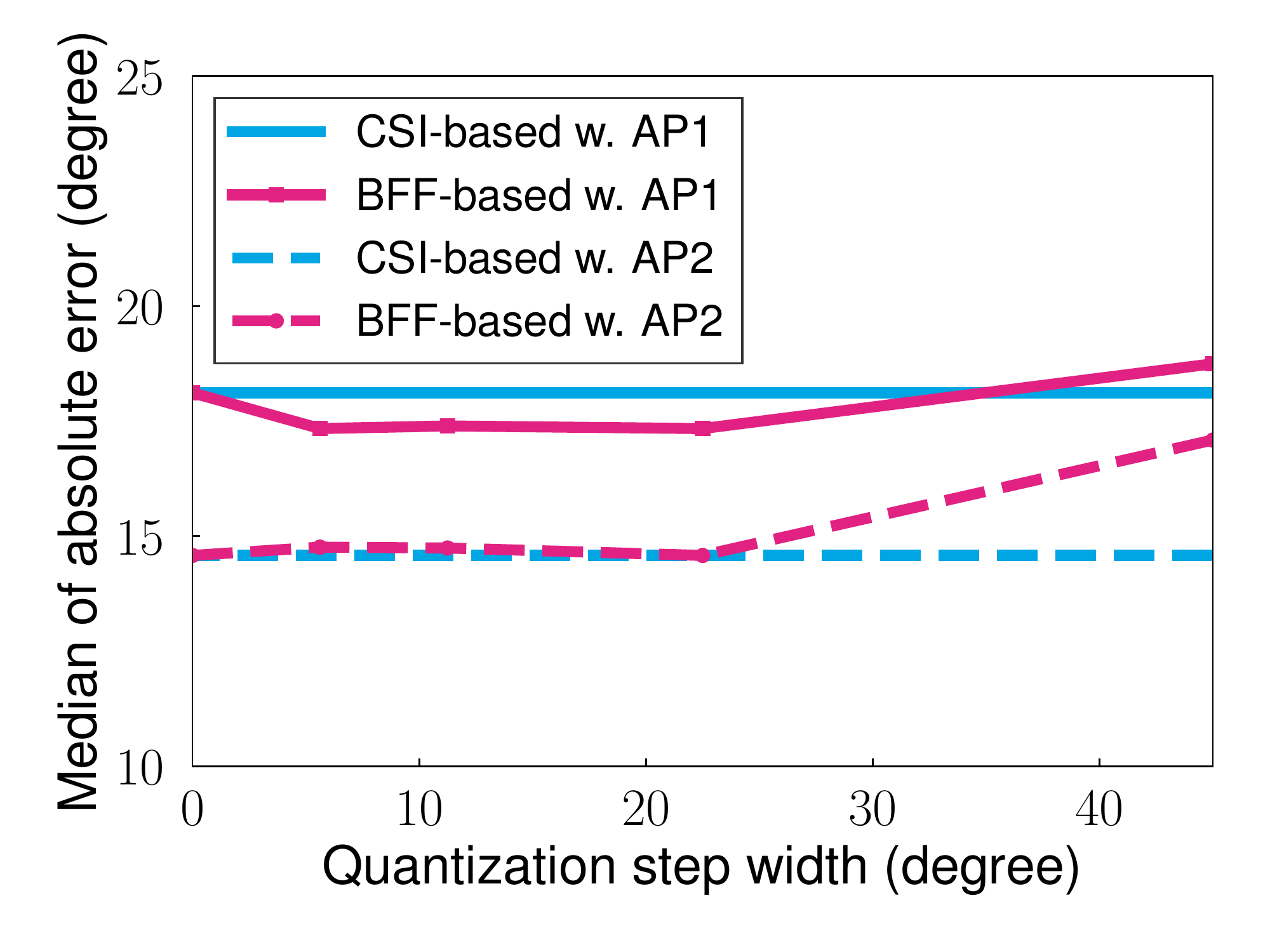}}
    \caption{Impact of quantization step size on median error of AoD estimation for each scenario.}
    \label{fig:quantize}
\end{figure}

\section{Conclusion}
\label{sec:conclusion}
This study confirmed that, to estimate multiple AoDs, an extension of the MUSIC algorithm is applicable using BFF, which contains only subcarrier-averaged stream gain and the highly quantized right singular matrix.
Numerical and experimental evaluations on three scenarios revealed that the AoD estimation accuracy of BFF-based MUSIC is comparable to that of CSI-based MUSIC.

\bibliographystyle{IEEEtran}
\bibliography{ith.bbl}

\begin{IEEEbiography}
    [{\includegraphics[width=1in, height=1.25in, clip, keepaspectratio]{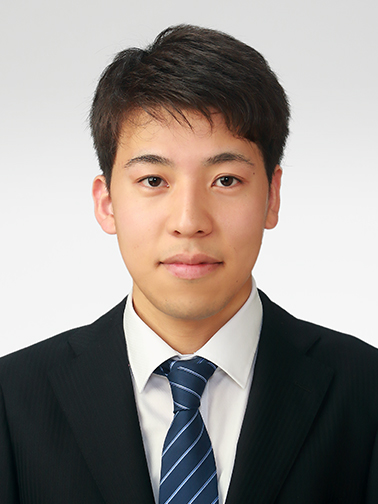}}]
    {Sohei~Itahara}
    received the B.E. degree in electrical and electronic engineering from Kyoto University in 2020.
    He is currently studying toward the M.I. degree at the Graduate School of Informatics, Kyoto University.
    He is a student member of the IEEE.
\end{IEEEbiography}

\begin{IEEEbiography}
    [{\includegraphics[width=1in, height=1.25in, clip, keepaspectratio]{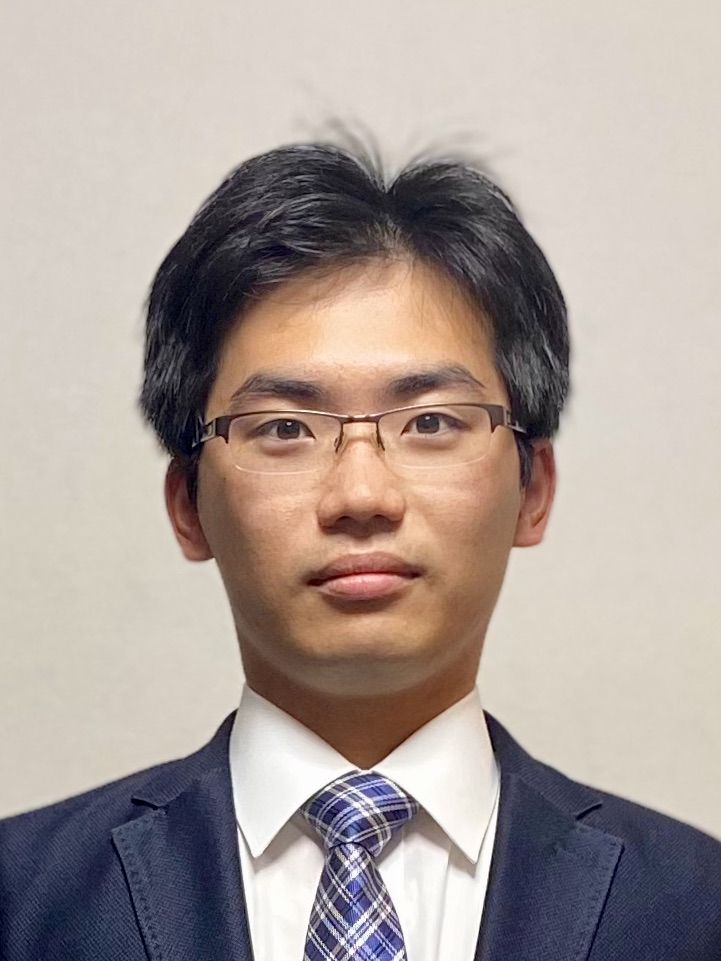}}]
    {Sota~Kondo}
    received the B.E. degree in electrical and electronic engineering from Kyoto University in 2021.
    He is currently studying toward the M.I. degree at the Graduate School of Informatics, Kyoto University.
    He is a student member of the IEEE.
\end{IEEEbiography}

\begin{IEEEbiography}
    [{\includegraphics[width=1in, height=1.25in, clip, keepaspectratio]{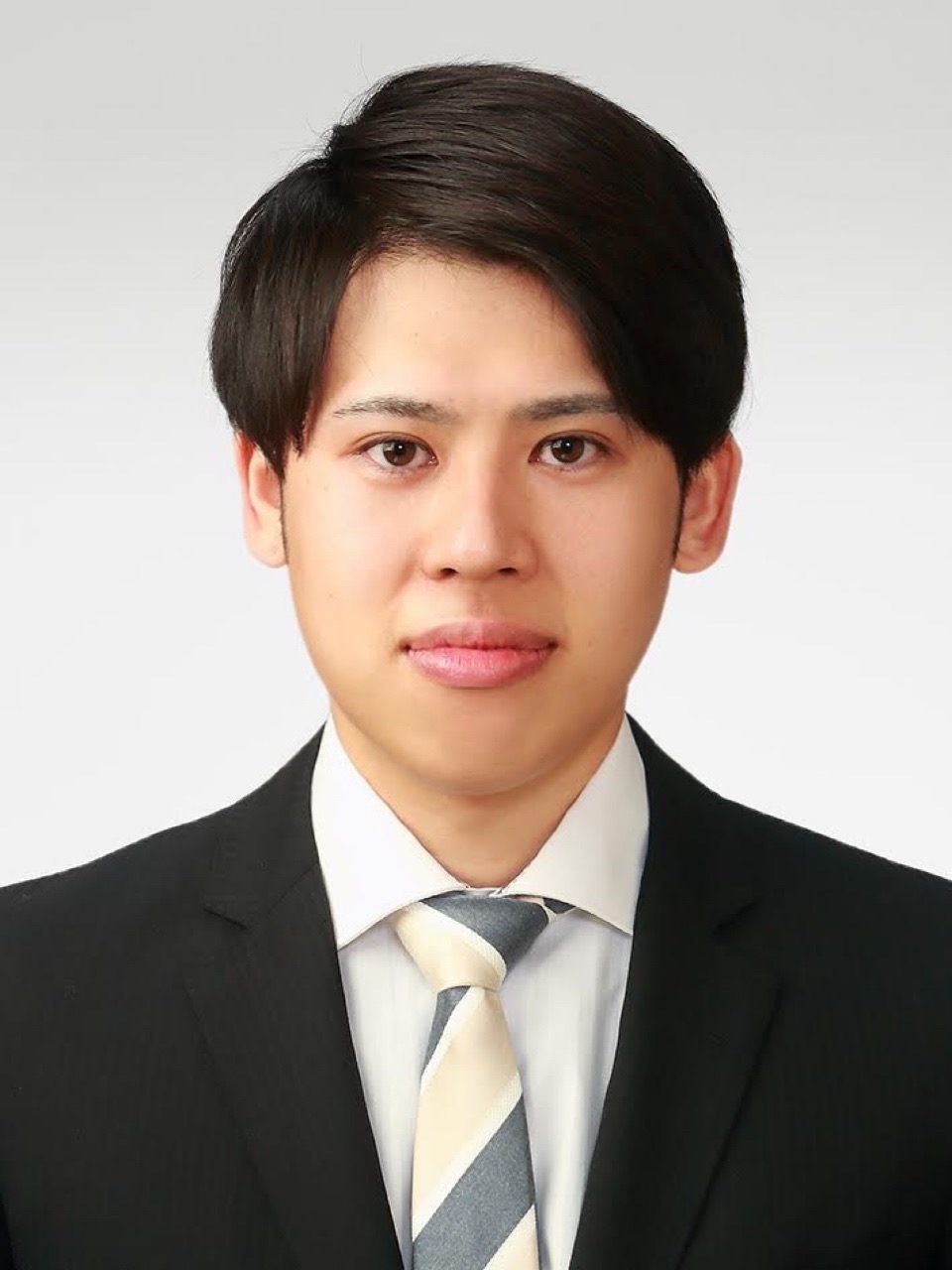}}]
    {Kota~Yamashita}
    received the B.E. degree in electrical and electronic engineering from Kyoto University in 2020.
    He is currently studying toward the M.I. degree at the Graduate School of Informatics, Kyoto University.
    He is a student member of the IEEE.
\end{IEEEbiography}

\begin{IEEEbiography}
    [{\includegraphics[width=1in, height=1.25in, clip, keepaspectratio]{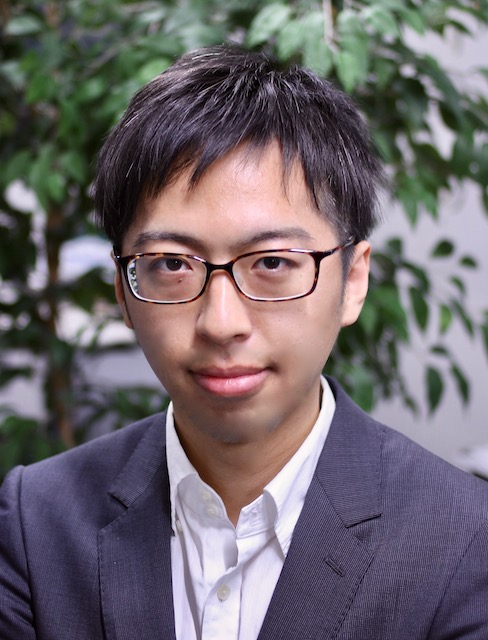}}]
    {Takayuki~Nishio}
    has been an associate professor in the School of Engineering, Tokyo Institute of Technology, Japan, since 2020.
    He received the B.E.\ degree in electrical and electronic engineering and the master's and Ph.D.\ degrees in informatics from Kyoto University in 2010, 2012, and 2013, respectively.
    He had been an assistant professor in the Graduate School of Informatics, Kyoto University from 2013 to 2020.
    From 2016 to 2017, he was a visiting researcher in Wireless Information Network Laboratory (WINLAB), Rutgers University, United States.
    His current research interests include machine learning-based network control, machine learning in wireless networks, and heterogeneous resource management.
\end{IEEEbiography}

\begin{IEEEbiography}
    [{\includegraphics[width=1in, height=1.25in, clip, keepaspectratio]{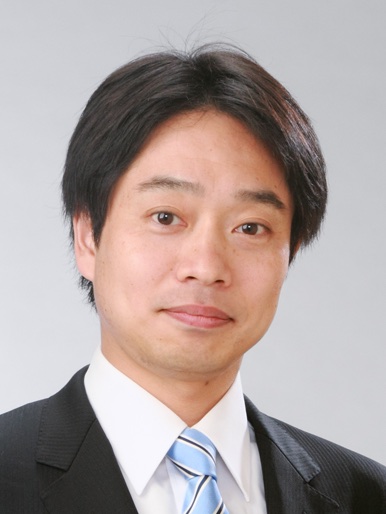}}]
    {Koji~Yamamoto}
    received the B.E.\ degree in electrical and electronic engineering from Kyoto University in 2002, and the master and Ph.D.\ degrees in Informatics from Kyoto University in 2004 and 2005, respectively.
    From 2004 to 2005, he was a research fellow of the Japan Society for the Promotion of Science (JSPS).
    Since 2005, he has been with the Graduate School of Informatics, Kyoto University, where he is currently an associate professor.
    From 2008 to 2009, he was a visiting researcher at {Wireless@KTH}, Royal Institute of Technology (KTH) in Sweden.
    He serves as an editor of IEEE Wireless Communications Letters, IEEE Open Journal of Vehicular Technology, and Journal of Communications and Information Networks, a symposium co-chair of GLOBECOM 2021, and a vice co-chair of IEEE ComSoc APB CCC.
    He was a tutorial lecturer in IEEE ICC 2019.
    His research interests include radio resource management, game theory, and machine learning.
    He received the PIMRC 2004 Best Student Paper Award in 2004, the Ericsson Young Scientist Award in 2006.
    He also received the Young Researcher's Award, the Paper Award, SUEMATSU-Yasuharu Award, Educational Service Award from the IEICE of Japan in 2008, 2011, 2016, and 2020, respectively, and IEEE Kansai Section GOLD Award in 2012.
    He is a senior member of the IEEE and a member of the Operations Research Society of Japan.
\end{IEEEbiography}

\begin{IEEEbiography}[{\includegraphics[width=1in, height=1.25in, clip, keepaspectratio]{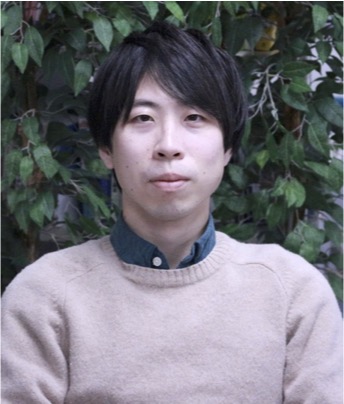}}]{Yusuke~Koda}
    received the B.E. degree in electrical and electronic engineering from Kyoto University in 2016, and the M.E. and the Ph.D. degree in informatics from the Graduate School of Informatics, Kyoto University in 2018 and 2021, respectively. He is currently a Postdoctoral Researcher with the Centre for Wireless Communications, University of Oulu, Finland, where he visited the Centre for Wireless Communications in 2019, to conduct collaborative research. He received the VTS Japan Young  Researcher's Encouragement Award in 2017 and TELECOM System Technology Award in 2020. He was a recipient of the Nokia Foundation Centennial Scholarship in 2019.
\end{IEEEbiography}

\EOD
\end{document}